\documentclass{article}

\usepackage{microtype}
\usepackage{graphicx}
\usepackage{subcaption}
\usepackage{booktabs} 
\usepackage[most]{tcolorbox}
\usepackage{balance}
\usepackage{listings}
\usepackage{listingsutf8} 
\usepackage{tcolorbox}
\tcbuselibrary{skins,breakable,listings}

\lstdefinestyle{promptxml}{
  basicstyle=\ttfamily\footnotesize,
  columns=fullflexible,
  keepspaces=true,
  breaklines=true,
  breakatwhitespace=false,
  showstringspaces=false,
  inputencoding=utf8,
  escapeinside={(*@}{@*)},
  breakindent=2em,
  literate=
    {γ}{{$\gamma$}}1
    {δ}{{$\delta$}}1
    {α}{{$\alpha$}}1
}
\usepackage{hyperref}

\usepackage[accepted]{icml2026}

\usepackage{amsmath}
\usepackage{amssymb}
\usepackage{mathtools}
\usepackage{amsthm}
\usepackage{pifont}
\usepackage[table]{xcolor}
\usepackage{fontawesome5}
\usepackage{xcolor}
\usepackage{makecell}

\usepackage{thmtools} 

\usepackage{amsmath,amsfonts,bm}

\def\eqref#1{equation~\ref{#1}}

\def\1{\bm{1}}

\def\rvh{{\mathbf{h}}}

\def\rvn{{\mathbf{n}}}

\def\rvq{{\mathbf{q}}}

\def\rvx{{\mathbf{x}}}
\def\rvy{{\mathbf{y}}}

\DeclareMathAlphabet{\mathsfit}{\encodingdefault}{\sfdefault}{m}{sl}
\SetMathAlphabet{\mathsfit}{bold}{\encodingdefault}{\sfdefault}{bx}{n}

\newcommand{\R}{\mathbb{R}}

\newcommand{\sigmoid}{\sigma}

\DeclareMathOperator*{\argmax}{arg\,max}

\usepackage[capitalize,noabbrev]{cleveref}
\usepackage{nicefrac}
\usepackage{enumitem}
\usepackage{xurl} 
\usepackage{inconsolata} 
\usepackage[T1]{fontenc}

\theoremstyle{plain}
\newtheorem{theorem}{Theorem}[section]
\newtheorem{proposition}[theorem]{Proposition}
\newtheorem{lemma}[theorem]{Lemma}

\theoremstyle{definition}
\newtheorem{definition}[theorem]{Definition}

\theoremstyle{remark}
\newtheorem*{remark}{Remark} 

\usepackage[textsize=tiny]{todonotes}

\newcommand{\mypar}[1]{\noindent \textbf{#1}}

\newcommand{\ourshort}{PragLocker}
\newcommand{\ourcode}{\ourshort{}\textsubscript{code}} 
\newcommand{\ourtune}{\ourshort{}\textsubscript{tune}} 

\newcommand{\myalgcomment}[1]{\hfill \COMMENT{#1}}

\usepackage{multirow}
\usepackage{booktabs}
\usepackage{graphicx}

\icmltitlerunning{PragLocker: Protecting Agent Intellectual Property in Untrusted Deployments via Non-Portable Prompts}

\begin{document}

\twocolumn[
  \icmltitle{PragLocker: Protecting Agent Intellectual Property in Untrusted Deployments via Non-Portable Prompts}



\icmlsetsymbol{equal}{*}
\icmlsetsymbol{mail}{\faEnvelope[regular]}

\begin{icmlauthorlist}
  \icmlauthor{Qinfeng Li}{zju,equal}
  \icmlauthor{Yuntai Bao}{zju,equal}
  \icmlauthor{Jianghui Hu}{cau,equal}
  \icmlauthor{Wenqi Zhang}{ngic}
  \icmlauthor{Jintao Chen}{zju}
  \icmlauthor{Huifeng Zhu}{wustl}
  \icmlauthor{Yier Jin}{ustc}
  \icmlauthor{Xuhong Zhang}{ngic,mail}
\end{icmlauthorlist}

\icmlaffiliation{zju}{Zhejiang University}
\icmlaffiliation{ngic}{Innovation and Management Center, School of Software Technology (Ningbo), Zhejiang University}
\icmlaffiliation{ustc}{University of Science and Technology of China}
\icmlaffiliation{cau}{Chang'an University}
\icmlaffiliation{wustl}{Washington University in St. Louis}

\icmlcorrespondingauthor{Xuhong Zhang}{zhangxuhong@zju.edu.cn}
  \icmlkeywords{Machine Learning, ICML}

  \vskip 0.3in
]



\printAffiliationsAndNotice{\icmlEqualContribution. \quad \faEnvelope[regular] Corresponding author.}  

\begin{abstract}
LLM agents rely on prompts to implement task-specific capabilities based on foundation LLMs, making agent prompts valuable intellectual property. However, in untrusted deployments, adversaries can copy and reuse these prompts with other proprietary LLMs, causing economic losses. To protect these prompts, we identify four key challenges: proactivity, runtime protection, usability, and non-portability that existing approaches fail to address. We present PragLocker, a prompt protection scheme that satisfies these requirements. PragLocker constructs function-preserving obfuscated prompts by anchoring semantics with code symbols and then using target-model feedback to inject noise, yielding prompts that only work on the target LLM. Experiments across multiple agent systems, datasets, and foundation LLMs show that PragLocker substantially reduces cross-LLM portability, maintains target performance, and remains robust against adaptive attackers.

\begin{center}
\faGithub ~\href{https://github.com/ZJU-OmniAI/praglocker}{Code}
\end{center}
\end{abstract}

\section{Introduction}
In recent years, amid the rapid adoption of large language models (LLMs), intelligent LLM agents, such as Cursor~\cite{cursor_features}, Manus ~\cite{ManusAI_Website_2026}, and Zapier~\cite{zapier_agents}, have emerged as autonomous executors of complex tasks. These agents, serving as a key interface between LLM capabilities and real-world applications, implement task capabilities largely through an agent \emph{system prompt} that specifies tasks, policies, and tool-use on top of a foundation LLM~\cite{openai_text_guide_instructions,anthropic_system_prompts}. As a result, prompt design becomes a key determinant of agent behavior and performance, constituting a core competitive asset. In particular, many agents may invoke the same underlying LLM (e.g., proprietary ones like ChatGPT or Gemini)~\cite{openai_chatgpt_2022,gemini_2023}, yet yield substantially different functionality and quality.

However, as highly valuable intellectual property (IP), prompts are vulnerable to theft after deployment, leading to substantial economic losses. Specifically, crafting a high-quality agent prompt typically requires significant expert knowledge and continual real-world iteration, making it a highly valuable asset~\cite{sahoo2024prompt_survey,yang2025prsa}. However, in practice, agents often run on user devices~\cite{cursor_features}, cloud services~\cite{spector2025zapier_agents_guide}, or multi-tenant infrastructures~\cite{nist_sp800_145}, where adversaries (e.g., malicious end users or insider cloud operators) may copy and misuse the prompt~\cite{hui2024pleak,wang2024raccoon}. Worse still, prompts are typically written in natural language; once leaked, they can be easily reused on any other, even stronger, proprietary LLM.
\textit{Once leaked, an attacker can reuse the prompt with other LLMs to build a similar or even stronger agent, undermining the original agent’s competitive advantage and causing substantial losses.} Therefore, effectively protecting these prompts in the deployment of agent systems has become a critical issue.

\begin{table*}[ht]
\centering
\caption{Comparison with existing solutions. \checkmark/\ding{55} illustrates whether the method can achieve the corresponding property. 
}
\resizebox{2\columnwidth}{!}{
\begin{tabular}{lcccc}
\toprule
Solutions (exemplar)                  &  Proactivity     &  Runtime security  & Usability & Non-Portability \\ \midrule
Prompt Watermarking~\cite{yang2025promptcos}              & \ding{55}          & \checkmark                    & \checkmark & \ding{55} \\
Encryption-based Protection~\cite{kubernetes_encrypt_data_at_rest})        & \checkmark      & \ding{55} & \checkmark  & \ding{55}  \\
Prompt Obfuscation~\cite{pape2025prompt}              & \checkmark      & \checkmark    & \ding{55}  & \ding{55} \\
PragLocker~(ours)              & \checkmark      & \checkmark   & \checkmark     & \checkmark       \\ \bottomrule
\end{tabular}
}
\label{table_intro}
\end{table*}

Unfortunately, as shown in Table \ref{table_intro}, traditional solutions struggle to protect the prompt of untrusted environment-deployed agents as they fail to satisfy the diverse requirements.
First, passive protection methods, e.g., \textbf{prompt watermarking}~\cite{yang2025promptcos,yao2024promptcare}, primarily verify ownership \textit{after} misuse. Thus, they do not proactively prevent prompt theft, rendering these methods vulnerable: once a prompt is stolen, it may be freely exploited without detection. 
In contrast, proactive protections aim to prevent unauthorized usage. For example, \textbf{encryption-based protection}~\cite{rana2023kms_survey,kubernetes_encrypt_data_at_rest}) keeps confidential information encrypted during transmission, distribution, and storage. However, at runtime, the prompt must be submitted to black-box LLM API~\cite{openai_api_reference_introduction,google_gemini_api_models} in plaintext; therefore, encryption can only protect the prompt at rest, not during inference.

Alternatively, a potential method is \textbf{prompt obfuscation}~\cite{pape2025prompt}, which replaces the original system prompt with an obfuscated yet function-preserving variant.
However, existing prompt obfuscation techniques are not directly applicable to agent prompt protection: they either remain portable across LLMs or require white-box access to the underlying model.
Specifically, EmojiPrompt~\cite{lin2025emojiprompt} encodes prompts into emojis to degrade human readability, but other LLMs can still decode and follow such surface-level encodings, so it does not prevent cross-model reuse. A closer line of work constructs prompts that appear noise-like and less portable while preserving utility on a target model; e.g., \citet{pape2025prompt} obfuscates prompts via reversible semantic transformations in the model’s representation space. However, many real-world agents (e.g., Cursor and Copilot) rely on proprietary black-box LLMs, where developers only have API-level access, making such white-box methods inapplicable. Fundamentally, \emph{constructing a prompt that is usable on the target LLM yet non-portable is challenging}, especially in black-box settings, where one can only rely on input-output feedback. The problem is further exacerbated by intra-family transfer: intra-family LLMs often behave similarly~\cite{mcgovern2025fingerprints}, further weakening the behavioral separation needed for non-portability.

Considering the limitations of existing defense strategies, we identify four key challenges (C) in protecting prompts for agents deployed in untrusted environments. \textbf{C1} (Proactivity): ensuring the prompt cannot be misused even if physically obtained by an attacker. \textbf{C2} (Runtime protection): ensuring protection not only during deployment but also during inference. \textbf{C3} (Usability): the protected prompt must preserve the agent’s performance on the target LLM.
\textbf{C4} (Non-Portability): simultaneously, the protected prompt must be human-unintelligible and ineffective on other LLMs.

In this paper, we propose \textbf{PragLocker} (\textbf{Ag}ent \textbf{Pr}ompt \textbf{Locker}), a black-box prompt obfuscation scheme. PragLocker obfuscates the system prompt before deployment, so the original prompt is never released in plaintext (addressing \textbf{C1}). This obfuscated prompt remains behavior-preserving on the target LLM and can be used directly without deobfuscation (addressing \textbf{C2}).

PragLocker is grounded in a key question: \textit{Does there exist such an obfuscated prompt that preserve same behavior on a target LLM while failing on other LLMs?} We answer yes with a theoretical motivation: there exists a perturbed prompt that makes the target LLM produce the same next token while causing a different next token on other LLMs.

Despite our theoretical insights, practical construction remains challenging. To obtain such prompts under the black-box constraint, PragLocker proceeds in two phases. First, it performs an \textit{initialization transformation} that converts the prompt into a code-symbol form, yielding an initial non-natural-language, utility-preserving concealment. However, this transformed prompt does not yet satisfy our goals: it can still be understood by other LLMs (many models can interpret code and symbols).
To address this issue, PragLocker introduces a second phase, \textit{noise-injected prompt optimization}. Specifically, analogous to black-box discrete optimization, we progressively inject character-level noise into the prompt as the optimization to reproduce the target LLM's original outputs. 
As a result, the noise is guided solely by target-LLM feedback; the resulting obfuscated prompt becomes model-specific: it preserves performance on the target LLM (addressing \textbf{C3}) while appearing noise-like and ineffective on other LLMs (addressing \textbf{C4}).

We evaluate PragLocker across diverse agent systems, foundation LLMs, and tasks, demonstrating strong protection without sacrificing task performance. Our obfuscated prompts exhibit near-zero portability across LLMs, even between FP16 and 4-bit variants of the same LLM. Moreover, \textit{even target LLM itself} cannot reliably interpret the obfuscated prompt, which instead behaves like a \emph{model-conditioned trigger} without retaining recoverable text-level information. The contributions of this work are as follows:
\begin{itemize}[topsep=-4pt,itemsep=-1.7pt]
\item To our knowledge, we are the first to study prompt protection for agents under untrusted deployment, where adversaries may copy and obtain the deployed prompt. We identify prompts as the primary embodiment of agent IP and distill four requirements for protection. 
\item We propose PragLocker, a black-box prompt protection scheme that replaces the system prompt with an obfuscated prompt that works on the target LLM but fails on others. We further provide a theoretical motivation by proving an existence theorem for such model-specific, function-preserving obfuscated prompts.
\item To construct such prompts under pure black-box access, PragLocker uses a two-phase pipeline: an \textit{initialization transformation} followed by \textit{noise injection}, optimizing obfuscated prompts without model internals.
\item Experiments across agent systems, datasets, and LLMs show that PragLocker preserves target performance, sharply reduces cross-model portability, and withstands adaptive attacks. We also report ablations and case studies to validate key design choices.
\end{itemize}

\section{Preliminaries}
\subsection{LLM Agent}
LLM agents~\cite{wang2024survey_llm_agents} are task-oriented systems built on top of foundation LLMs. Concretely, an agent is mainly defined by (i) a \emph{system prompt} that specifies tasks, policies, and tool-use, and (ii) an underlying LLM that executes prompt-conditioned reasoning and generation. In practice, many commercial agents~\cite{cursor_features,friedman2021copilot} are built on the same proprietary black-box foundation models, \emph{making the system prompt the dominant portion of the agent developer's IP}. In addition, agents are frequently deployed in heterogeneous and potentially untrusted environments~\cite{nist_sp800_145,spector2025zapier_agents_guide}, including third-party hosting platforms, multi-tenant clouds, and end-user devices, which increases the exposure surface of the deployed prompt and elevates prompt confidentiality and integrity as central security concerns.

\subsection{Related Work}

\mypar{Prompt Watermarking.} Prompt watermarking embeds a signature into a prompt to provide ownership evidence, enabling verification after theft. For example, PromptCARE~\cite{yao2024promptcare} applies small, utility-preserving perturbations and detects infringement via a predefined verification protocol (e.g., trigger queries) that yields statistically distinguishable response patterns. More recent work (e.g., PromptCOS~\cite{yang2025promptcos}) moves toward content-only auditing by jointly optimizing the system prompt with verification queries and target “signal” outputs, then testing suspected prompts via output-similarity based verification.

\mypar{Encryption-based Protection.} Conventional encryption (e.g., storing the system prompt as an encrypted file/secret~\cite{kubernetes_encrypt_data_at_rest}) protects confidentiality at rest and in transit, but inference requires decryption into plaintext and transmission to the LLM, allowing the execution host to extract the prompt post-decryption~\cite{karvandi2024reversing}. TEEs offer attested isolation (e.g., SGX/TDX~\cite{costan2016intel}, GPU confidential computing~\cite{nvidia_hcc_whitepaper_wp11459}) to reduce runtime exposure, yet are often incompatible with LLM agents: many rely on proprietary black-box API LLMs~\cite{openai_api_reference_introduction, google_gemini_api_models}, so the model cannot run inside the same trusted environment and the decrypted prompt must still be sent to external services, undermining confidentiality.

\mypar{Confidentiality Infrastructure.}
Confidential-inference infrastructure protects prompt privacy by providing an attested TEE/CVM endpoint so prompts are decrypted and processed only inside a protected runtime, sometimes with split/partitioned execution to maintain serving efficiency~\cite{gim2024confidential,yuan2025scx,su2025runtime}. However, it requires third-party platform cooperation (attestation/key release, correct routing) and thus is outside our threat model: a malicious platform can still access the prompt in plaintext.

\mypar{Prompt Obfuscation.} Prompt obfuscation replaces the original system prompt with an obfuscated yet function-preserving variant. For example, EmojiPrompt~\cite{lin2025emojiprompt} encodes sensitive text into emoji-based representations. However, such surface-level obfuscation does not prevent \emph{reuse on other LLMs}, offering no defense against cross-model misuse. \citet{pape2025prompt} optimizes prompts in the model’s representation space to retain utility while concealing the underlying instructions. These methods, however, require model-internal access, which is unavailable in black-box agent deployments.

\mypar{Prompt Optimization.}
\textit{Soft prompt tuning} is a well-founded approach to parameter-efficient fine-tuning, which trains task-specific embeddings as prompt prefix to task queries using gradient-based optimization~\citep{lester2021power}.
Historically, discrete prompt optimization is a difficult problem~\citep{shin2020autoprompt,singh2023explaining} and gradient-free optimization is even harder~\citep{deng2022rlprompt,zhang2024agent}.
We position our technical objective as gradient-free discrete optimization that optimizes for task performance, obfuscation, and non-portability constraints.

\subsection{Threat Model}
We consider two parties: the defender is the party that owns the deployed agent, and the attacker aims to steal the prompt.

\mypar{Defender.} The defender deploys an LLM agent as a service (e.g., on third-party hosting or end-user devices) and accesses the target LLM only via a black-box API. The target LLM is trusted not to exfiltrate prompts, while the deployment environment is untrusted. The defender’s goal is to proactively provide runtime protection for the agent system prompt while preserving on-target usability, such that even if the deployed prompt is obtained, it cannot be reused to reproduce comparable behavior on non-target LLMs.

\mypar{Adversary.} The adversary can obtain the deployed prompt. An attack succeeds if the stolen (or recovered) prompt enables similar agent functionality on a different model. Beyond naïve copying, the adversary may use limited compute and data to mount adaptive recovery attacks (e.g., deobfuscation optimization) to reconstruct a usable prompt.

\section{Our Design: PragLocker}

\begin{figure*}[t]
    \centering
    \includegraphics[width=1\linewidth]{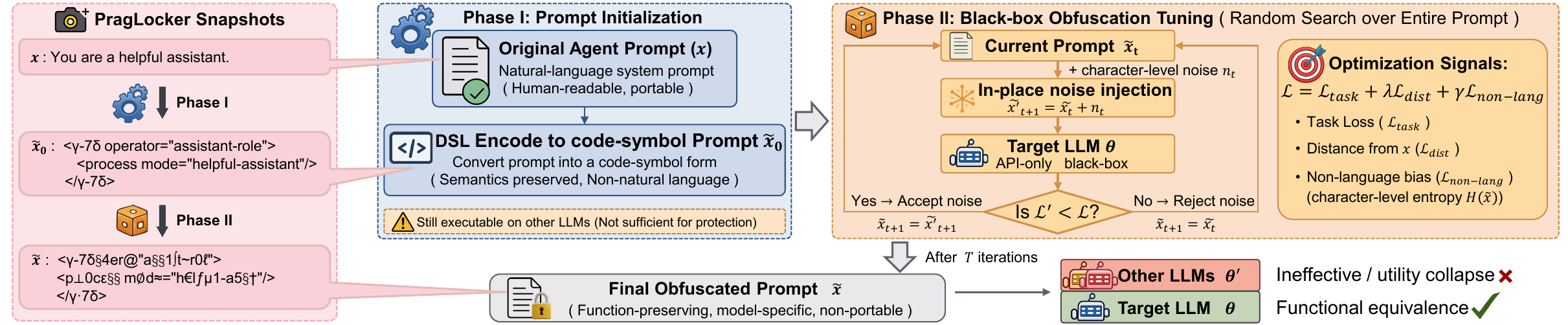}
    \caption{A pipeline of PragLocker. PragLocker transforms a plaintext prompt into a model-specific obfuscated form through a two-phase process: (i) a code-symbol initialization that preserves task semantics, and (ii) noise-injected black-box optimization driven by target LLM feedback. The final obfuscated prompt remains usable at runtime on the target LLM but resists reuse or recovery on other models.}
    \label{method}
\vskip -0.1in
\end{figure*}



\subsection{Problem Formulation}
Let the prompt space be $\mathcal{P} \subset \{ \mathcal{V}, \mathcal{V}^2, \dots \}$ where $\mathcal{V}$ is model vocabulary,
the protected agent prompt be $\rvx \in \mathcal{P}$
and $\theta$ be model parameters.
\textbf{Our objective} is to identify another prompt $\tilde{\rvx} \in \mathcal{P}$ that satisfy the following criteria:
(1) \textit{Obfuscation (C2)}: $\tilde{\rvx}$ is distant from $\rvx$ in prompt space;
(2) \textit{Usability (C3)}: $\tilde{\rvx}$ is functionally equivalent to $\rvx$;
(3) \textit{Non-portability (C4)}: the utility objective is satisfied only on the target LLM, not other LLMs.

\subsection{Theoretical Motivation} \label{subsec:theoretical_motivation}
In this subsection, we explain the theoretical motivation for the feasibility of \ourshort{}, i.e. to identify an alternative prompt to the original prompt that simultaneously satisfy the criteria of obfuscation, usability and non-portability. 
We defer detailed discussions to Appendix \cref{app:theoretical_motivation}.

\mypar{Roadmap.}
We define \textit{functional equivalence} for single-token generation, prove the existence of \textit{local stability regions} that allow for obfuscation, and discuss \textit{non-portability} across models.

\mypar{Local Obfuscation and Utility.} Let $\mathbf{h} = \mathrm{Embed}(\mathbf{x}) \in \mathbb{R}^{n \times d}$ be the embeddings for prompt $\rvx$. We seek an obfuscated prompt $\tilde{\rvx}$ with embeddings $\tilde{\rvh} = \rvh + \bm{\delta}$ such that output behavior is preserved under greedy decoding.

\begin{definition}[Functional equivalence]
Embeddings $\tilde{\rvh}$ and $\rvh$ are equivalent w.r.t. query $\rvq_i$ if $\argmax_y f(y \vert \tilde{\rvh}, \rvq_i) = \argmax_y f(y \vert \rvh, \rvq_i)$.
\end{definition}

Trivially, equivalence holds if the \textit{correct-class margin} $m(\tilde{\rvh}, \rvq_i, y_i) \coloneqq f(\tilde{\rvh}, \rvq_i)_{y_i} - \max_{k \neq y_i} f(\tilde{\rvh}, \rvq_i)_k$ remains positive. Since $f(\cdot;\theta)$ is a composition of continuous functions (layers, activations), $m(\cdot)$ is continuous. Thus, for any $\rvh$ where $m > 0$, there exists an $\epsilon$-ball $B_{\epsilon}(\rvh)$ within a \textit{stability region} $S_{\rvx} = \{ \rvh' \mid \forall \rvq_i \in \mathcal{Q}, m(\rvh', \rvq_i, y_i) > 0 \}$ where utility is perfectly preserved.

\begin{theorem}[Existence of obfuscated prompts]
For a prompt $\rvx$, there exists $\tilde{\rvx} \neq \rvx$ such that $\tilde{\rvh} \in S_{\rvx}$ (Utility) and $d(\tilde{\rvx}, \rvx) \geq d_0$ (Obfuscation).
\end{theorem}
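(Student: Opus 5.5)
The plan is to turn the continuous stability statement that precedes the theorem into a discrete one. I will measure $d$ on token sequences (Hamming or edit distance for equal-length prompts). I will then find many token-level substitutions whose combined embedding perturbation stays inside the ball $B_\epsilon(\rvh)\subseteq S_{\rvx}$. The argument has three steps, and the third relies on a vocabulary-richness assumption that I expect will have to be added as a hypothesis.

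\textbf{Step 1 (an explicit radius).} The correct-class margin $m(\cdot,\rvq_i,y_i)$ is continuous, and $m(\rvh,\rvq_i,y_i)>0$ for every $\rvq_i$ in the finite query set $\mathcal{Q}$. So for each $i$ there is an $\epsilon_i>0$ with $m>0$ on $B_{\epsilon_i}(\rvh)$. Setting $\epsilon=\min_i \epsilon_i>0$ gives $B_\epsilon(\rvh)\subseteq S_{\rvx}$. Finiteness of $\mathcal{Q}$ is what makes this minimum strictly positive.

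\textbf{Step 2 (token-wise substitutions).} Write $\rvh=(e(x_1),\dots,e(x_n))$, where $e$ is the embedding table. For a position $j$ and a token $v\neq x_j$, replacing $x_j$ by $v$ changes $\rvh$ by a matrix $\bm\delta$ that is nonzero only in row $j$, with norm $\|e(v)-e(x_j)\|$. Suppose I pick a set $J$ of $k$ positions and, for each $j\in J$, a substitute $v_j$ with $\|e(v_j)-e(x_j)\|\le \epsilon/\sqrt{k}$. Because the rows are disjoint, the Frobenius norm of the total perturbation is at most $\epsilon$ (take strict inequality to land in the open ball). Hence the resulting $\tilde{\rvx}$ satisfies $\tilde{\rvh}\in S_{\rvx}$. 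By construction $\tilde{\rvx}\neq\rvx$ and $d(\tilde{\rvx},\rvx)=k$ in Hamming distance. Choosing $k\ge d_0$ then gives the obfuscation condition.

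\textbf{Step 3 (main obstacle: the substitutes must exist).} The theorem as stated holds for an arbitrary $d_0$, yet a finite vocabulary need not contain, for each original token, a different token whose embedding lies within $\epsilon/\sqrt{k}$. I would handle this with an explicit hypothesis and justify it as realistic rather than hide it. Concretely, I would assume that at least $d_0$ positions of $\rvx$ carry tokens with a near-duplicate in $\mathcal{V}$ within the required radius; casing, whitespace and byte-level variants commonly play this role in BPE vocabularies. Equivalently, I would state the theorem for all $d_0\le k^\ast$, where $k^\ast$ is the largest $k$ for which such substitutes exist.

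If that assumption feels too strong, I would try a second route that extends the prompt instead of substituting tokens, using the fact that $\mathcal{P}$ allows longer sequences. I would append $d_0$ filler tokens and argue, via attention dilution, that the change in the last-position logits is small. Making this rigorous requires a bound on the Lipschitz constant of $f$ with respect to inserted context, and I expect that to be the hardest part. For that reason I would present the substitution argument as the main proof and mention the extension route only as a remark.
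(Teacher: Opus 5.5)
Your proposal is correct up to the hypothesis you state openly, and it follows the same skeleton as the paper's proof. Continuity of the margin gives $B_\epsilon(\rvh)\subseteq S_{\rvx}$, the embedding shift is split into per-token pieces, and one perturbs $k\ge d_0$ positions so that the prompt distance grows while $\tilde{\rvh}$ stays inside the ball.

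You make explicit two points the paper leaves implicit. The first is finiteness of $\mathcal{Q}$, which is what makes $\epsilon>0$. The second is row-disjointness, which allows $\epsilon/\sqrt{k}$ per token instead of the paper's triangle-inequality budget $\sum_{j\in\mathcal{K}}\|\bm{\delta}_j\|<\epsilon$.

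The real difference is at the crux. The paper justifies the small shifts by selecting low-sensitivity tokens (near-zero gradient under attention dilution). Inside the ball argument that is the wrong quantity: $\|\bm{\delta}_j\|=\|e(v_j)-e(x_j)\|$ is fixed by the embedding table and does not depend on how much attention position $j$ receives. Your Step~3 hypothesis is exactly what the ball argument needs, and it exposes the cap $d_0\le k^\ast\le n$ that the paper's ``sufficiently many tokens'' hides. Some hypothesis of this kind is unavoidable. For equal-length prompts the competing token sequences give only finitely many embeddings, and continuity alone permits a positive-margin region around $\rvh$ that contains none of them. So this is a correct repair of the paper's step, not a gap in yours.

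The paper's sensitivity intuition would support a different assumption instead. You would drop the isotropic ball and bound the margin change directly through per-position sensitivities, so that $S_{\rvx}$ can extend far along low-sensitivity positions. That is essentially your second route, and the Lipschitz or curvature control you expect to need there is precisely the work the paper's sketch omits.

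One small repair to Step~2: if $d$ is Levenshtein distance (the paper's default), $k$ substitutions only give $d(\tilde{\rvx},\rvx)\le k$. Edit distance can undercut Hamming distance through shifts; for example, \texttt{abab} versus \texttt{baba} has Hamming distance $4$ but edit distance $2$. Require each substitute $v_j$ to be a token that does not occur anywhere in $\rvx$, and add this to your Step~3 hypothesis. Then every substituted position costs at least one edit, so $d(\tilde{\rvx},\rvx)=k$.
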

\textit{Proof sketch.} Transformer attention often exhibits low sensitivity to specific tokens (attention dilution). By perturbing $k$ such tokens, the cumulative shift $\|\Delta \rvh\| \leq \sum_{j \in \mathcal{K}} \|\bm{\delta}_j\|$ can be kept within $\epsilon$ to maintain utility, while the distance $d(\tilde{\rvx}, \rvx)$ grows with $k$ to satisfy the obfuscation bound $d_0$.

\mypar{Non-portability.} Non-portability arises from \textit{manifold mismatch}: the stability region $S_{\rvx}(\theta)$ is highly dependent on the specific geometry of the loss landscape defined by $\theta$. A perturbation $\bm{\delta}$ optimized to stay within the decision boundaries of model $\theta$ is unlikely to reside within the distinct stability region $S_{\rvx}(\theta')$ of a different model, as $P(\tilde{\rvh} \in S_{\rvx}(\theta')) \ll P(\tilde{\rvh} \in S_{\rvx}(\theta))$ in high-dimensional space.

\begin{remark}[Between our theoretical motivation and method design]
Although our methodology is not entirely aligned with our discussions above,
we point out that these analyses serve as a theoretical lens that motivates our method design, showcasing the possibility of identifying prompts that satisfy the obfuscation, utility, and non-portability criteria.
\end{remark}

\subsection{\ourshort{} Methodology}

\mypar{Overview.}
As is shown in \cref{method}, \ourshort{} consists of two components:
prompt initialization and obfuscation optimization.
Prompt initialization primarily contributes to our utility objective, while obfuscation optimization is used to achieve obfuscation and non-portability objectives.

\mypar{Prompt Initialization.}
Our prompt initialization strategy consists of two key design considerations.
First, we use the original prompt $\rvx$ as the warm-start, which is essential for ensuring that the obfuscated prompt achieves the same level of performance as the original task prompt.
Second, we prompt the target LLM to to encode $\rvx$ in a code-symbol form, such that the encoded prompt ($\tilde{\rvx}_0$) is no longer written in natural language while having the same semantics as $\rvx$.
This approach could be understood as a preliminary obfuscation step conditioned by the target LLM that also creates redundancy, thereby setting the stage for subsequent obfuscation optimization.



\mypar{Non-Portable Obfuscation Optimization.}
Our obfuscation strategy is built upon \textit{random search (RS)}, a black-box, gradient-free, discrete optimization method~\citep{rastrigin1963convergence}.
This approach is reminiscent of \textit{evolution strategies}~\citep{schwefel1977evolutionsstrategien,salimans2017evolution}, and has recently been used for gradient-free optimization of jailbreaking suffixes~\citep{andriushchenko2025jailbreaking}.

RS allows us to optimize textual prompts when there is only API-level access to prompt inputs, responses, and log-probabilities of the target LLM.
The \textbf{intuition} for RS to work well in practice is that, although it is convenient for humans to describe task-solving requirements and instructions in natural language, it might not be the most effective and efficient approach~\citep{chang2024efficient,li2025prompt}.
Therefore, the inherent redundancy of the initialized prompt ($\tilde{\rvx}_0$) allows RS to improve model performance while leaving room for obfuscation.

We provide an algorithmic description of obfuscation optimization in \cref{alg:praglocker}.
Obfuscation optimization might be understood as RS over the entire task prompt $\rvx$.
At each step of obfuscation optimization, we inject textual noise from a predesignated noise set, usually the commonly used set of printable characters.
Since each instance of noise might have a mixed effect on either task performance or textual readability, we constrain the accepted set of noise with a custom objective function to ensure that the final obfuscated prompt is a piece of non-natural text that retains task performance.

\mypar{Optimization Objective.}
The objective function for obfuscation optimization consists of three terms:
\begin{equation}
\begin{aligned}
\mathcal{L} &= \mathcal{L}_{\text{task}} + \lambda \mathcal{L}_{\text{dist}} + \gamma \mathcal{L}_{\text{non-lang}}, \text{~where} \\
\mathcal{L}_{\text{task}} &= - \log p(\rvy \vert \rvq, \tilde{\rvx}), \\
\mathcal{L}_{\text{dist}} &= - \log \sigmoid (\mathrm{Dist}(\tilde{\rvx}, \rvx)), \\
\mathcal{L}_{\text{non-lang}} &= - H(\tilde{\rvx}),
\end{aligned}
\end{equation}
where $\lambda, \gamma \in \R$ are constant coefficients and $\sigma(\cdot)$ is the sigmoid function;
$\mathcal{L}_{\text{task}}$ implements the utility constraint,
$\mathcal{L}_{\text{dist}}$ implements the obfuscation constraint
while $\mathcal{L}_{\text{non-lang}}$ implements both obfuscation and non-portability constraints.

Particularly, $\mathcal{L}_{\text{dist}}$ ensures that the final obfuscated prompt differs from the original prompt by maximizing their edit distance where $\mathrm{Dist}(\cdot)$ is \textit{Levenshtein distance}~\citep{lcvenshtcin1966binary};
$\mathcal{L}_{\text{non-lang}}$ pushes the obfuscated prompt away from the natural language distribution by minimizing Shannon entropy: $H(\tilde{\rvx}) = - \sum_{c \in \mathcal{A}} v_{c} \log v_{c}$ where $\mathcal{A}$ is alphabet and $v_c$ is the frequency of character $c$ appearing in $\tilde{\rvx}$.
$\mathcal{L}_{\text{non-lang}}$ is motivated by the fact that natural language prompts are inherently portable.
By pushing the obfuscated prompt away from the natural language distribution, we make it dedicated for model-specific intricacies of the loss landscape, thereby implicitly minimizing its inter-model portability.

\begin{algorithm}
\caption{\ourshort{} algorithm.}
\label{alg:praglocker}
\begin{algorithmic}
\STATE {\bfseries Input:} Base LLM $p(\cdot)$, agent prompt $\rvx$, task training set $\mathcal{D}$, training steps $T$, noise set $\mathcal{N}$, loss function $\mathcal{L}(\cdot)$
\STATE {\bfseries Output:} Obfuscated agent prompt $\tilde{\rvx}$
\STATE $\tilde{\rvx}_0 \gets \mathrm{Init}(\rvx)$ \myalgcomment{Initialization}
\STATE $t \gets 0$
\WHILE{$t < T$}
    \STATE $(\rvq_t, \rvy_t) \sim \mathcal{D}$
    \STATE $\rvn_t \sim \mathcal{N}$ \myalgcomment{Sample character-level noise}
    \STATE $\tilde{\rvx}_{t+1}' \gets \tilde{\rvx}_t + \rvn_t$ \myalgcomment{In-place noise injection}
    \STATE $l_t \gets \mathcal{L}(p(\rvy_t \vert \rvq_t, \tilde{\rvx}_t), \rvy_t)$
    \STATE $l_t' \gets \mathcal{L}(p(\rvy_t \vert \rvq_t, \tilde{\rvx}_{t+1}'), \rvy_t)$
    \IF{$l_t' < l_t$}
        \STATE $\tilde{\rvx}_{t+1} \gets \tilde{\rvx}_{t+1}'$ \myalgcomment{Accept update if loss is lower}
    \ELSE
        \STATE $\tilde{\rvx}_{t+1} \gets \tilde{\rvx}_{t}$ \myalgcomment{Discard update otherwise}
    \ENDIF
    \STATE $t \gets t + 1$
\ENDWHILE
\end{algorithmic}
\end{algorithm}


\begin{table*}[t]
\renewcommand{\arraystretch}{0.8}
\centering
\setlength{\tabcolsep}{2pt}
\caption{Measuring prompt \textbf{non-portability} across different underlying LLMs.
For each \emph{Target LLM}, we develop a prompt and then simulate prompt theft by evaluating its performance when copied and reused on other LLMs. 
We report performance for each agent--task pair under four settings: \emph{Without protection}, \emph{PragLocker (ours)}, and two variants \emph{\ourtune{}} and \emph{\ourcode{}}. We report the \textit{Mean Portability Loss}, which is calculated by summing all metrics and expressing it as a multiple of without protection.
}
\resizebox{2.07\columnwidth}{!}{
\begin{tabular}{cccccccccccccccccc}
\toprule
\multirow{2}{*}{\textbf{Agent}} & \multirow{2}{*}{\textbf{Tasks}} & \multirow{2}{*}{\textbf{Target LLM}} & \multicolumn{3}{c}{\textbf{Without protection}} & & \multicolumn{3}{c}{\textbf{PragLocker (Ours)}} & & \multicolumn{3}{c}{\textbf{\ourtune{}}} & & \multicolumn{3}{c}{\textbf{\ourcode{}}} \\ 
\cline{4-6} \cline{8-10} \cline{12-14} \cline{16-18} \noalign{\smallskip} & & & GPT-4o & Gemini 2 & DeepSeek & & GPT-4o & Gemini 2 & DeepSeek & &  GPT-4o & Gemini 2 & DeepSeek & &  GPT-4o & Gemini 2 & DeepSeek \\
\cline{1-18} \noalign{\smallskip}

\multirow{6}{*}{LessonL} 
& \multirow{3}{*}{HumanEval} 
  & GPT-4o     & - & 98.78 & 97.56 &  &- & 3.04 & 1.22 & & - & 92.07 & 90.85& & - & 95.73 & 94.51 \\
& & Gemini 2   & 93.90 & - & 97.56 & & 0.61 & - & 1.22 & & 85.36 & - & 89.63 & & 92.07 & - & 95.12 \\
& & DeepSeek   & 93.90 & 98.78 & - & & 0.61 & 2.44 & - & & 82.93 & 93.29 & - & & 90.24 & 95.73 & - \\
\cline{2-18} \noalign{\smallskip}

& \multirow{3}{*}{MBPP} 
  & GPT-4o     & - & 97.33 & 94.56 &  &- & 1.03 & 0.72 & & - & 87.26 & 85.21& & - & 96.50 & 93.94 \\
& & Gemini 2   & 91.89 & - & 94.56 & & 0.51 & - & 0.62 & & 81.21 & - & 84.29 & & 91.37 & - & 93.43 \\
& & DeepSeek   & 91.89 & 97.33 & - & & 0.62 & 0.92 & - & & 80.59 & 87.78 & - & & 91.47 & 96.71 & - \\
\cline{1-18} \noalign{\smallskip}

\multirow{6}{*}{ReadAgent}
& \multirow{3}{*}{NarrativeQA}
  & GPT-4o     & - & 20.81 & 24.22 &  &- & 9.89 & 10.16 & & - & 12.83 & 12.43& & - & 22.61 & 22.74 \\
& & Gemini 2   & 23.52 & - & 24.22 & & 10.11 & - & 10.13 & & 11.46 & - & 12.32 & & 21.87 & - & 23.50 \\
& & DeepSeek   & 23.52 & 20.81 & - & & 9.14 & 8.07 & - & & 11.10 & 11.95 & - & & 21.17 & 23.72 & - \\
\cline{2-18} \noalign{\smallskip}

& \multirow{3}{*}{QUALITY}
  & GPT-4o     & - & 80.90 & 75.90 &  &- & 50.68 & 53.41 & & - & 65.95 & 65.86& & - & 80.23 & 84.96 \\
& & Gemini 2   & 86.19 & - & 88.06 & & 52.20 & - & 54.44 & & 62.63 & - & 64.54 & & 85.31 & - & 87.11 \\
& & DeepSeek   & 86.19 & 75.90 & - & & 51.12 & 48.56 & - & & 63.03 & 66.12 & - & & 84.15 & 80.50 & - \\
\cline{1-18} \noalign{\smallskip}

\multirow{6}{*}{A-Mem}
& \multirow{3}{*}{LoCoMo}
  & GPT-4o     & - & 22.85 & 25.62 &  &- & 0.08 & 0.12 & & - & 15.76 & 16.13& & - & 21.73 & 23.95 \\
& & Gemini 2   & 24.61 & - & 25.62 & & 0.06 & - & 0.14 & & 11.98 & - & 15.22 & & 24.03 & - & 25.48 \\
& & DeepSeek   & 24.61 & 22.85 & - & & 0.06 & 0.10 & - & & 12.87 & 16.61 & - & & 24.70 & 22.46 & - \\
\cline{2-18} \noalign{\smallskip}

& \multirow{3}{*}{DialSim}
  & GPT-4o     & - & 2.17 & 3.62 &  &- & 0.04 & 0.10 & & - & 1.28 & 2.08& & - & 2.08 & 3.41 \\
& & Gemini 2   & 2.86 & - & 3.62 & & 0.07 & - & 0.11 & & 1.45 & - & 2.59 & & 2.46 & - & 3.53 \\
& & DeepSeek   & 2.86 & 2.17 & - & & 0.06 & 0.04 & - & & 1.57 & 1.36 & - & & 2.79 & 2.15 & - \\
\cline{1-18} \noalign{\smallskip}

\rowcolor{gray!20}
\multicolumn{3}{c}{\textbf{Cross-LLM Portability Ratio ($\downarrow$)}} & \multicolumn{3}{c}{\textbf{1.00}$\times$} &  & \multicolumn{3}{c}{\textbf{0.20}$\times$} &  & \multicolumn{3}{c}{\textbf{0.82}$\times$} &  & \multicolumn{3}{c}{\textbf{0.99}$\times$} \\
\bottomrule
\end{tabular}
}
\label{Main_Result}
\end{table*}

\section{Experiments}

\subsection{Experimental Settings}
\paragraph{\textbf{Agents.}}
We evaluate PragLocker across diverse agent domains by considering three representative agents with different tasks and designs: LessonL~\cite{liu2025lessons} (multi-agent programming with a shared lesson bank), ReadAgent~\cite{lee2024human} (long-context reading with episodic memory and gist-level compression), and A-MEM~\cite{xu2025mem} (long-term memory via structured notes and dynamic indexing). Each agent is instantiated with three proprietary backbone LLMs: GPT-4o (GPT-4o)~\cite{openai_gpt4o_2024}, Gemini 2 Flash Preview (Gemini 2)~\cite{google_gemini_api_models}, and DeepSeek Chat (DeepSeek)~\cite{deepseek_chat_web}.


\mypar{\textbf{Dataset and Metrics.}}
We evaluate each agent on two established benchmarks and report their standard metrics: LessonL uses HumanEval~\cite{chen2021evaluating} and MBPP~\cite{austin2021program} (pass@1); ReadAgent uses NarrativeQA~\cite{kovcisky2018narrativeqa} (token-level F1) and QuALITY~\cite{pang2022quality} (accuracy); A-MEM uses LoCoMo~\cite{maharana2024evaluating} and DialSim~\cite{kim2024dialsim} (token-level F1, open-ended QA).

\mypar{\textbf{Baselines.}}
We compare against two controlled baselines (PragLocker ablations), since existing prompt protection/obfuscation methods rely on assumptions mismatched to our black-box untrusted-deployment setting. Specifically, \textit{\ourtune{}} performs only the optimization stage, skipping the initialization transformation; \textit{\ourcode{}} applies only the transformation that rewrites the prompt into a code-like representation, without subsequent optimization.


\subsection{Main Results} \label{subsec:main_results}

\paragraph{\textbf{Non-Portability.}}
We first evaluate whether PragLocker resists \emph{cross-model misuse} after prompt theft. We simulate an attacker who obtains the deployed (obfuscated) prompt and reuses it on other LLMs. As shown in Table~\ref{Main_Result}, prompts are highly portable \emph{without protection}: copying a prompt to a different underlying LLM preserves strong performance (e.g., a HumanEval prompt achieves 98.78 on Gemini~2 and 97.56 on DeepSeek). In contrast, \textit{PragLocker makes stolen prompts largely unusable}, driving performance close to zero across agents and tasks (e.g., 3.04/1.22), and achieves the lowest relative mean portability (0.2$\times$), compared to 1$\times$ without protection. Moreover, full PragLocker outperforms the simplified baselines: \ourcode{} remains nearly as portable as no protection (0.99$\times$), whereas \ourtune{} only partially reduces portability (0.82$\times$).

\begin{table}[t]
\renewcommand{\arraystretch}{0.6}
\centering
\setlength{\tabcolsep}{2pt}
\caption{
\textbf{Performance preservation} of PragLocker-protected prompts.
We compare the original prompt with its PragLocker-protected counterpart on the target LLM.
}
\resizebox{1.0\columnwidth}{!}{
\begin{tabular}{ccccccccc}
\toprule
\multirow{2}{*}{\textbf{Agent}} &
\multirow{2}{*}{\textbf{Tasks}} &
\multicolumn{3}{c}{\textbf{Without protection}} &
& 
\multicolumn{3}{c}{\textbf{After protection}} \\
\cline{3-5} \cline{7-9} \noalign{\smallskip}
& &
GPT-4o & Gemini 2 & DeepSeek &
& 
GPT-4o & Gemini 2 & DeepSeek \\
\cline{1-9} \noalign{\smallskip}

\multirow{2}{*}{LessonL}
& HumanEval
  & 93.90 & 98.78 &97.56
  &  
  & 94.51 & 99.39 & 98.17 \\
\cline{2-9} \noalign{\smallskip}

& MBPP
  & 91.89 & 97.33 & 94.56
  &  
  & 91.99 & 97.54 & 94.46 \\
\cline{1-9} \noalign{\smallskip}

\multirow{2}{*}{ReadAgent}
& NarrativeQA
  & 23.52 & 20.81 & 24.22
  &  
  & 23.61 & 20.93 & 25.81 \\
\cline{2-9} \noalign{\smallskip}

& QUALITY
  & 86.19 & 75.90 & 88.06 &  
  & 86.03 & 76.13 & 87.97 \\
\cline{1-9} \noalign{\smallskip}

\multirow{2}{*}{A-Mem}
& LoCoMo
  & 24.61 & 22.85 & 25.62
  &  
  & 25.11 & 23.01 & 26.80 \\
\cline{2-9} \noalign{\smallskip}

& DialSim
  & 2.86 & 2.17 & 3.62
  &  
  & 2.84 & 2.31 & 3.70 \\
\cline{1-9} \noalign{\smallskip}
\rowcolor{gray!20}
\multicolumn{2}{c}{\textbf{Performance Preservation ($\uparrow$)}} & \multicolumn{3}{c}{\textbf{1.00}$\times$} &  & \multicolumn{3}{c}{\textbf{1.01}$\times$} \\
\bottomrule
\end{tabular}
}
\label{Performance_Preservation}
\vskip -0.1in
\end{table}


\begin{figure*}[t]
\centering
\begin{tcbraster}[raster columns=2,raster equal height,nobeforeafter]
\begin{tcolorbox}[
  colback=gray!10!white,
  colframe=black,
  left=2mm,right=2mm,top=1mm,bottom=1mm,
  boxsep=0pt,
]
\footnotesize
\tcbsubtitle[colback=gray,colframe=gray]{Original Prompt}
\vspace{-2.5mm}
\begin{lstlisting}[style=promptxml,
  basicstyle=\ttfamily\scriptsize,
]
Given the following question, generate several keywords, using ',' as the separator.(*@\hfill{\color{gray}\scriptsize\textit{// line 1}}@*)
Question: {How does photosynthesis work?}(*@\hfill{\color{gray}\scriptsize\textit{// line 2}}@*)
Format your response as a JSON object with a "keywords" field containing the selected text.(*@\hfill{\color{gray}\scriptsize\textit{// line 3}}@*)
Example response format:(*@\hfill{\color{gray}\scriptsize\textit{// line 4}}@*)
{{"keywords": "keyword1, keyword2, keyword3"}}(*@\hfill{\color{gray}\scriptsize\textit{// line 5}}@*)
\end{lstlisting}
\vspace{9.4mm}
\tcbsubtitle[colback=gray,colframe=gray]{DeepSeek Output}
\vspace{-2.5mm}
\begin{lstlisting}[style=promptxml,
  basicstyle=\ttfamily\scriptsize,
]
{"keywords": "photosynthesis, process, plants, sunlight, carbon dioxide, water, glucose, oxygen, chlorophyll, chloroplasts"}
\end{lstlisting}
\vspace{-2mm}
\tcbsubtitle[colback=gray,colframe=gray]{GPT-4o Output}
\vspace{-2.5mm}
\begin{lstlisting}[style=promptxml,
  basicstyle=\ttfamily\scriptsize,
]
{"keywords": "photosynthesis, process, light energy, chlorophyll, carbon dioxide, water, glucose, oxygen, plants"}
\end{lstlisting}
\vspace{-3mm}
\end{tcolorbox}
\begin{tcolorbox}[
  colback=gray!10!white,
  colframe=black,
  left=2mm,right=2mm,top=1mm,bottom=1mm,
  boxsep=0pt,
]
\footnotesize
\tcbsubtitle[colback=gray,colframe=gray]{PragLocker Protected Prompt (DeepSeek as target model)}
\vspace{-2.5mm}
\begin{lstlisting}[style=promptxml,
  basicstyle=\ttfamily\scriptsize,
]
<γ-7δS4erator="lexosy6thmnzeparatorI"cosmy*X>+<input t3pid"quer{" val="{How does photosynthesis work?}"/>_<<r6cesGOaode="keyword-exzract"vcout="$ynQeiw" scoring="tf4gf"/> (*@\hfill{\color{gray}\scriptsize\textit{// line 1}}@*)
<Jutput Vo%0qt="$son"'> (*@\hfill{\color{gray}\scriptsize\textit{// line 2}}@*)
<:/eld name="k*ywor$s< type="CBncat<nate9" separatorM",/" VEndVnq="d,rect"/>(*@\hfill{\color{gray}\scriptsize\textit{// line 3}}@*)
</%utput>(*@\hfill{\color{gray}\scriptsize\textit{// line 4}}@*)
<exa.p;e-template>(*@\hfill{\color{gray}\scriptsize\textit{// line 5}}@*)
<json>{"keywo6Rs": "α1, α2, α3"O</json>(*@\hfill{\color{gray}\scriptsize\textit{// line 6}}@*)
S/exaKple-template>(*@\hfill{\color{gray}\scriptsize\textit{// line 7}}@*)
B/γ,7δ>x x x -H< x xjxrx|x(*@\hfill{\color{gray}\scriptsize\textit{// line 8}}@*)
\end{lstlisting}
\vspace{-2mm}
\tcbsubtitle[colback=gray,colframe=gray]{DeepSeek Output}
\vspace{-2.5mm}
\begin{lstlisting}[style=promptxml,
  basicstyle=\ttfamily\scriptsize,
]
{"keywords": "photosynthesis, process, sunlight, chlorophyll, carbon dioxide, water, oxygen, glucose, plants, energy"}
\end{lstlisting}
\vspace{-2mm}
\tcbsubtitle[colback=gray,colframe=gray]{GPT-4o Output}
\vspace{-2.5mm}
\begin{lstlisting}[style=promptxml,
  basicstyle=\ttfamily\scriptsize,
]
{It looks like your input includes a mix of structured markup and placeholder-like syntax....}
\end{lstlisting}
\vspace{-3mm}
\end{tcolorbox}
\end{tcbraster}
\caption{Case study: the original prompt and the protected prompt on DeepSeek (target model) vs. GPT-4o; task: keyword extraction.}
\vskip -0.1in
\end{figure*}

\mypar{\textbf{Performance Preservation.}}
We further assess whether PragLocker preserves the agent’s utility by replacing the original system prompt with the protected prompt and re-running the same tasks. As Table~\ref{Performance_Preservation} shows negligible degradation: protected performance closely matches the original across agents/tasks (relative mean 1.01$\times$ vs.\ 1$\times$). \textit{The slight gain likely stems from the feedback-guided tuning stage, which can refine redundant or suboptimal phrasing in the original prompt while maintaining its intended functionality.}


\begin{table*}[t]
\renewcommand{\arraystretch}{1} 
\centering
\setlength{\tabcolsep}{2pt}
\caption{Comparison of inter-family and intra-family portability. Left: inter-family transfer. Right: intra-family transfer within the Qwen2.5 Instruct family (Qwen2.5-7B-Instruct (Qwen-7B) / Qwen2.5-14B-Instruct (-14B) / Qwen2.5-14B-Instruct-bnb-4bit (-14B-4bit)).}
\label{tab:portability_one_table_fixed}

\resizebox{1.5\columnwidth}{!}{
\begin{tabular}{ccccccccccc}
\toprule

\multirow{2}{*}{\textbf{Tasks}} &
\multirow{2}{*}{\shortstack{\textbf{Target}\\\textbf{LLM}}} &
\multirow{2}{*}{\shortstack{\textbf{Original}\\\textbf{performance}}} &
\multicolumn{3}{c}{\textbf{Inter-family  portability}} &
\multirow{2}{*}{\shortstack{\textbf{Target}\\\textbf{LLM}}} &
\multirow{2}{*}{\shortstack{\textbf{Original}\\\textbf{performance}}} &
\multicolumn{3}{c}{\textbf{Intra-family portability}} \\
\cmidrule(lr){4-6}\cmidrule(lr){9-11}
& & & GPT-4o & Gemini 2 & DeepSeek
& & & Qwen-7B & -14B & -14B-4bit \\
\midrule

\multirow{3}{*}{\shortstack{NarrativeQA
}}
& GPT-4o   & 23.52 & -   & 9.89 & 10.16
& Qwen-7B  & 16.21 & -   & 8.12 & 8.97 \\
& Gemini 2 & 20.81 & 10.11 & -   & 10.13
& -14B     & 19.36 & 6.55 & -   & 7.12 \\
& DeepSeek & 24.22 & 9.14 & 8.07 & -
& -14B-4bit & 18.98 & 5.96 & 7.54 & - \\
\midrule

\multirow{3}{*}{QUALITY}
& GPT-4o   & 86.19 & -   & 50.68 & 53.41
& Qwen-7B  & 67.60 & -   & 44.57 & 48.92 \\
& Gemini 2 & 75.90 & 52.20 & -   & 54.44
& -14B     & 74.84 & 43.45 & -   & 46.75 \\
& DeepSeek & 88.06 & 51.12 & 48.56 & -
& -14B-4bit & 72.07 & 41.14 & 43.23 & - \\

\bottomrule
\end{tabular}
}
\label{Intra_Family_Portability}
\end{table*}

\subsection{Case Studies} \label{subsec:case_study}
We present a concise case study illustrating how PragLocker converts a human-readable system prompt into a protected prompt for a keyword-extraction subtask from ReadAgent. We provide an executable code implementation of this case study in the Supplementary Material. Specifically, taking DeepSeek as the target model, the protected prompt preserves utility: it follows the intended instruction and returns relevant keywords in the required JSON structure, matching the behavior of the original prompt.

However, directly transferring the same protected prompt to GPT-4o fails: the model interprets the obfuscated markup as malformed or noisy input. Moreover, the protected prompt is unreadable and structurally perturbed, leaving at most sparse, non-actionable fragments, which makes reconstructing the original prompt from the deployed artifact highly impractical. This challenge compounds for more complex tasks, where obfuscation becomes even more effective.

\begin{table*}[t]
\renewcommand{\arraystretch}{0.6}
\centering
\setlength{\tabcolsep}{2pt}
\caption{Adaptive-attack evaluation results. We report task performance under three adaptive attackers: (i) \emph{Deobfuscation}, which optimizes the obfuscated prompt with an inverted objective; (ii) \emph{LLM-assisted recovery}, which queries the target LLM to reconstruct the original prompt; and (iii) \emph{Naive self-prompting}, where the target LLM synthesizes a fresh prompt for itself using only task input--output pairs.}
\resizebox{2.07\columnwidth}{!}{
\begin{tabular}{cccccccccccccccccc}
\toprule
\multirow{2}{*}{\textbf{Agent}} & \multirow{2}{*}{\textbf{Tasks}} & \multirow{2}{*}{\textbf{Target LLM}} & \multicolumn{3}{c}{\textbf{PragLocker (Ours)}} & & \multicolumn{3}{c}{\textbf{LLM-assisted recovery}} & & \multicolumn{3}{c}{\textbf{Deobfuscation}} & & \multicolumn{3}{c}{\textbf{Naive prompt}} \\ 
\cline{4-6} \cline{8-10} \cline{12-14} \cline{16-18} \noalign{\smallskip} & & & GPT-4o & Gemini 2 & DeepSeek & & GPT-4o & Gemini 2 & DeepSeek & &  GPT-4o & Gemini 2 & DeepSeek & &  GPT-4o & Gemini 2 & DeepSeek \\
\cline{1-18} \noalign{\smallskip}
\multirow{6}{*}{LessonL} 
& \multirow{3}{*}{HumanEval} 
  & GPT-4o     
  & - & 3.04 & 1.22 
  &  & - & 86.59 & 87.20 
  &  & - & 3.04 & 1.83
  &  & - & 94.51 & 95.12 \\
& & Gemini 2   
  & 0.61 & - & 1.22
  &  & 82.32 & - & 85.37
  &  & 0.61 & - & 1.22
  &  & 90.85 & - & 93.90 \\
& & DeepSeek 
  & 0.61 & 2.44 & -
  &  & 84.15 & 87.02 & -
  &  & 1.22 & 2.44 & -
  &  & 91.68 & 95.73 & - \\
\cline{2-18} \noalign{\smallskip}

& \multirow{3}{*}{MBPP} 
  & GPT-4o     
  & - & 1.03 & 0.72
  &  & - & 83.16 & 84.29
  &  & - & 1.13 & 1.33
  &  & - & 90.24 & 90.45 \\
& & Gemini 2   
  & 0.51 & - & 0.62
  &  & 79.57 & - & 82.34
  &  & 0.72 & - & 0.62
  &  & 85.22 & - & 90.14 \\
& & DeepSeek 
  & 0.62 & 0.92 & -
  &  & 82.03 & 83.68 & -
  &  & 1.32 & 0.92 & -
  &  & 85.73 & 90.97 & - \\
\cline{1-18} \noalign{\smallskip}

\multirow{6}{*}{ReadAgent}
& \multirow{3}{*}{NarrativeQA}
  & GPT-4o     
  & - & 9.89 & 10.16
  &  & - & 5.46 & 6.87
  &  & - & 10.21 & 10.34
  &  & - & 17.38 & 21.32 \\
& & Gemini 2   
  & 10.11 & - & 10.13
  &  & 4.64 & - & 5.29
  &  & 10.26 & - & 10.36
  &  & 19.28 & - & 21.05 \\
& & DeepSeek 
  & 9.14 & 8.07 & -
  &  & 6.71 & 6.33 & -
  &  & 9.28 & 8.20 & -
  &  & 21.03 & 18.85 & - \\
\cline{2-18} \noalign{\smallskip}

& \multirow{3}{*}{QUALITY}
  & GPT-4o     
  & - & 50.68 & 53.41
  &  & - & 21.82 & 39.06
  &  & - & 51.13 & 54.20
  &  & - & 72.11 & 83.97 \\
& & Gemini 2   
  & 52.20 & - & 54.44
  &  & 31.82 & - & 34.57
  &  & 53.54 & - & 54.88
  &  & 82.34 & - & 82.87 \\
& & DeepSeek 
  & 51.12 & 48.56 & -
  &  & 38.56 & 26.79 & -
  &  & 51.36 & 51.22 & -
  &  & 84.29 & 72.61 & - \\
\cline{1-18} \noalign{\smallskip}

\multirow{6}{*}{A-Mem}
& \multirow{3}{*}{LoCoMo}
  & GPT-4o     
  & - & 0.08 & 0.12
  &  & - & 14.08 & 16.23
  &  & - & 0.39 & 0.47
  &  &  & 19.21 & 22.86 \\
& & Gemini 2   
  & 0.06 & - & 0.14
  &  & 12.89 & - & 14.94
  &  & 0.25 & - & 0.55
  &  & 18.23 & - & 20.47 \\
& & DeepSeek 
  & 0.06 & 0.10 & -
  &  & 16.43 & 14.57 & -
  &  & 0.30 & 0.37 & -
  &  & 22.69 & 20.33 &  \\
\cline{2-18} \noalign{\smallskip}

& \multirow{3}{*}{DialSim}
  & GPT-4o     
  & - & 0.04 & 0.10
  &  & - & 1.22 & 1.75
  &  & - & 0.06 & 0.15
  &  & - & 2.07 & 2.56 \\
& & Gemini 2   
  & 0.07 & - & 0.11
  &  & 1.41 & - & 1.56
  &  & 0.12 & - & 0.18
  &  & 2.13 & - & 2.29 \\
& & DeepSeek 
  & 0.06 & 0.04 & -
  &  & 1.33 & 1.31 & -
  &  & 0.10 & 0.05 & -
  &  & 2.45 & 2.04 & - \\
\cline{1-18} \noalign{\smallskip}
\rowcolor{gray!20}
\multicolumn{3}{c}{\textbf{Relative Attack Gain ($\downarrow$)}} & \multicolumn{3}{c}{\textbf{1}$\times$} &  & \multicolumn{3}{c}{\textbf{3.49}$\times$} &  & \multicolumn{3}{c}{\textbf{1.03}$\times$} &  & \multicolumn{3}{c}{\textbf{4.78}$\times$} \\
\bottomrule
\end{tabular}
}
\label{Adaptive_Attacks}
\vskip -0.1in
\end{table*}

\subsection{Further Analysis}
\paragraph{\textbf{Ablation Study.}}
We conduct a stage-wise ablation to isolate the contributions of \emph{code transformation} and \emph{noise-injected optimization} under the same evaluation protocol.
As shown in Table~\ref{Main_Result}, the two stages play distinct roles and are most effective when combined. \textit{\ourcode{}}, which only rewrites the prompt into a code-like representation, offers little protection against cross-model reuse: its mean portability degradation remains $0.99\times$ as code-form prompts are still broadly executable across other LLMs; however, \textbf{\textit{we interpret this stage as primarily providing a structured initialization that further expands the optimization search space for the subsequent tuning stage.}} Specifically, \textit{\ourtune{}} (optimization-only) reduces portability to $0.82\times$ but remains far weaker than full PragLocker ($0.2\times$). Overall, strong resistance requires \emph{both} stages: code transformation supplies an obfuscating scaffold and optimization room, while optimization ultimately converts it into a strongly non-portable prompt representation.

\mypar{Intra-Family Portability.}
\label{subsec:intra-family-portability}
As PragLocker is optimized from target-LLM I/O feedback, a natural concern is whether behaviorally similar sibling models permit intra-family reuse.
We therefore evaluate pairwise portability within the Qwen2.5 Instruct family (Qwen-7B / -14B / -14B-4bit).
Table~\ref{Intra_Family_Portability} indicates strong non-portability under both scale shifts and quantization shifts, and exhibits no qualitative difference from inter-family portability on the left side of the table: on NarrativeQA, although the original performance is 16.21 (Qwen-7B), 19.36 (-14B), and 18.98 (-14B-4bit), transferring protected prompts across siblings drops to only 5.96--8.97 (e.g., 7B$\rightarrow$14B: 8.12; 14B$\rightarrow$7B: 6.55; 14B$\rightarrow$14B-4bit: 7.12).
We attribute this robustness to our noise-injected optimization objective, which tightly couples the protected prompt to the target model’s idiosyncratic feedback dynamics, so that even modest distributional shifts (scaling or quantization) substantially degrade reuse.

\newcommand\FramedBox[3]{%
  \setlength\fboxsep{0pt}
  \fbox{\parbox[t][#1][c]{#2}{\centering\huge #3}}}

\begin{figure}[!t]
\centering
\includegraphics[width=0.85\linewidth]{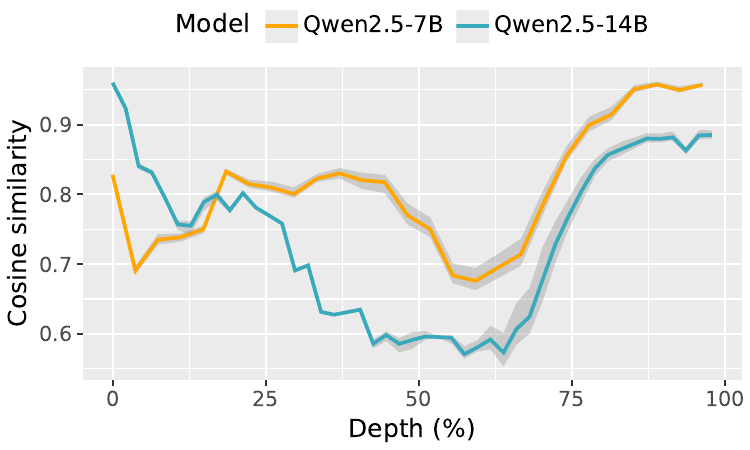}
\caption{Layerwise hidden state cosine similarity between obfuscated prompt and original prompt; non-portability direction: Qwen2.5-7B $\rightarrow$ Qwen2.5-14B; 95\% confidence interval is shown.}
\label{fig:mechanistic_analysis}
\vskip -0.1in
\end{figure}

\mypar{Hidden-State Alignment.}
To probe the mechanistic origin of non-portability, we compare how the prompt pair (original, obfuscated) is represented inside the target versus a non-target model. We use a protected prompt optimized on Qwen2.5-7B (target) and run the original and obfuscated prompts separately through the model. At each transformer layer $\ell$, we extract their hidden states and compute the cosine similarity between the resulting representations as a layerwise alignment score. We then feed the identical prompt pair into Qwen2.5-14B (non-target) to repeat the measurement.

\cref{fig:mechanistic_analysis} reveals a clear \emph{trajectory divergence} across depth. The two models behave similarly in early layers, but their alignment patterns separate as depth increases: after a mid-layer dip, Qwen2.5-7B exhibits a stronger recovery and reaches substantially higher similarity in the later layers, whereas Qwen2.5-14B remains less aligned. This suggests that PragLocker’s functionality preservation is realized through a target-specific internal pathway—its obfuscated prompt is mapped to representations that converge back toward those induced by the original prompt \emph{only} in the target model—providing mechanistic evidence that non-portability is formed predominantly in deeper representations rather than in shallow token-level processing.

\mypar{Token-Only Feedback.}
Our default implementation uses token log-probabilities to compute the cross-entropy task loss in Eq.~(1). To test whether this feedback is necessary, we evaluate a stricter token-only setting, where the optimizer only observes decoded outputs and uses the task metric as the black-box objective. Since this signal is sparser, we increase the optimization budget to 100 epochs.
As shown in Table~\ref{tab:token_only_readagent_quality}, PragLocker remains effective under token-only feedback with 100 optimization epochs. The optimized prompts preserve target utility, achieving 85.85\%, 71.31\%, and 82.05\% accuracy on GPT-4o, Gemini-2, and DeepSeek, respectively, while still showing clear cross-model degradation. For example, the GPT-4o optimized prompt drops from 85.85\% on GPT-4o to 58.16\% on Gemini-2 and 63.50\% on DeepSeek. These results show that log-probabilities are helpful but not required; PragLocker can operate with token-only black-box feedback at the cost of a larger optimization budget.

\begin{table}[t]
\centering
\small
\caption{Token-only feedback evaluation of ReadAgent on QuALITY. The prompt is optimized using only decoded token outputs, without access to token log-probabilities.}
\label{tab:token_only_readagent_quality}
\resizebox{1\columnwidth}{!}{
\begin{tabular}{cccccc}
\toprule
\textbf{Agent} & \textbf{Tasks} & \textbf{Target LLM} 
& \textbf{GPT-4o} 
& \makecell{\textbf{Gemini 2}} 
& \makecell{\textbf{DeepSeek}} \\
\midrule
\multirow{3}{*}{ReadAgent} 
& \multirow{3}{*}{QuALITY} 
& GPT-4o    & 85.85 & 58.16 & 63.50 \\
& 
& Gemini 2  & 59.67 & 71.31 & 60.82 \\
& 
& DeepSeek  & 58.56 & 55.24 & 82.05 \\
\bottomrule
\end{tabular}
}
\end{table}

\subsection{Adaptive Attacks} \label{subsec:adaptive_attack}
\paragraph{\textbf{LLM-Assisted Prompt Recovery.}}
In this section, we test whether the obfuscated prompt can be recovered at the \emph{text} level. Specifically, we simulate a sophisticated attacker who queries the \emph{target LLM itself} to (i) interpret the obfuscated prompt and reconstruct a natural-language instruction prompt, which is then (ii) substituted back into the agent system (more details are provided in \cref{subapp:llm_assisted_recovery_details}). 

Strikingly, as shown in Table \ref{Adaptive_Attacks}, \textbf{\textit{even the target LLM itself fails to reliably recover the underlying instructions}}, the recovered prompts remain unusable and do not restore transferable functionality. Specifically, LLM-assisted recovery yields only marginal gains over directly reusing the PragLocker-protected prompt (from $1\times$ to $3.49\times$ in relative mean performance).
This suggests that, after obfuscation, the protected prompt behaves less like a text-level semantic instruction and more like a \textbf{\emph{model-conditioned trigger}}: its utility is encoded in the target model’s idiosyncratic response dynamics, leaving little recoverable text-level semantics even to the target LLM itself.

\mypar{Deobfuscation Attack.}
We also consider a sophisticated attacker who knows our obfuscation pipeline and is given the same task instructions, input–output examples, and evaluation metrics as the defender. Under this setting, the attacker starts from the deployed obfuscated prompt $\tilde{x}$ and attempts to ``deobfuscate'' it by iteratively re-optimizing the prompt toward a more portable, human-readable instruction.
Concretely, the attacker drops $\mathcal{L}_{\text{dist}}$, keeps $\mathcal{L}_{\text{task}}$, flips the sign of $\mathcal{L}_{\text{non-lang}}$, and then iteratively tunes the prompt accordingly to push it back toward the natural-language distribution. For comparison, we also include a \emph{naive prompt} baseline, where the attacker provides the task query and ground-truth outputs and asks the target LLM to synthesize a fresh prompt for solving the task (details of the \textit{deobfuscation} and \textit{naive prompt} attacks are provided in \cref{subapp:deobfuscation_details}).

As shown in Table~\ref{Adaptive_Attacks}, this attack is largely ineffective: Relative Mean Performance improves by only $1.03\times$, and per-task metrics remain near unusable, far from approaching the unprotected prompt’s portability. Notably, it is even weaker than the naive prompt baseline ($4.78\times$), suggesting that \textit{attackers are better off directly eliciting a new prompt themself from the LLM than attempting to invert our obfuscation.}

\section{Limitation and Discussion}


\mypar{Tokenizer Compatibility.}
PragLocker injects character-level noise, which requires the target LLM’s tokenizer to have broad character coverage so that perturbed text still maps to valid subword/byte tokens (instead of unknown/rejected tokens). This assumption generally holds for mainstream production API LLMs (e.g., GPT-series~\cite{openai_chatgpt_2022}, Google Gemini~\cite{gemini_2023}).

\mypar{Prompt-Length Overhead.}
PragLocker can lengthen the system prompt and increase the prefill cost. In practice, the obfuscated prompt is a fixed prefix, so its KV states can be cached and reused across requests rather than recomputed each time. Thus, the extra cost is largely amortized per cached context, limiting its impact on end-to-end latency.

\mypar{Comparison with White-Box Defenses.}
PragLocker is designed for black-box prompt protection in proprietary LLM deployments, where the defender accesses the target model only through APIs. As a result, this work does not extensively compare against prompt-protection methods developed under stronger assumptions, such as white-box gradient access or model adaptation. These methods operate in a partially different regime and are therefore not directly comparable to our setting.

\mypar{Protection scope.}
PragLocker protects a specific and practically important component of agent IP: deployed prompts, which often encode task instructions, control logic, and format constraints in prompt-based agents. Its goal is to reduce the direct cross-model reusability of leaked prompts, rather than eliminate all routes to reproducing agent behavior. Broader behavior-reproduction attacks, such as imitation, prompt induction, or distillation from examples, follow a different attack path and are therefore largely orthogonal to PragLocker.



\section{Conclusions}
We study system-prompt theft in untrusted LLM-agent deployments, where a leaked prompt can be readily reused across models. Specifically, we identify prompts as the primary embodiment of agent IP and distill four requirements for protection. We propose PragLocker, a black-box protection method that transforms a prompt into a model-conditioned, non-portable form via structured initialization and noise-injected optimization. Experiments show that PragLocker strongly reduces portability to other LLMs, remaining robust to LLM-assisted recovery attempts.

\section*{Acknowledgements}
This work was supported by the Key R\&D Program of Ningbo under Grant No.2024Z115.

\section*{Impact Statement}

This paper presents work whose goal is to advance the field of machine learning. There are many potential societal consequences of our work, none of which we feel must be specifically highlighted here.

\balance

\bibliography{example_paper}

@inproceedings{yao2024promptcare,
  title={Promptcare: Prompt copyright protection by watermark injection and verification},
  author={Yao, Hongwei and Lou, Jian and Qin, Zhan and Ren, Kui},
  booktitle={2024 IEEE Symposium on Security and Privacy (SP)},
  pages={845--861},
  year={2024},
  organization={IEEE}
}

@article{yang2025promptcos,
  title={PromptCOS: Towards Content-only System Prompt Copyright Auditing for LLMs},
  author={Yang, Yuchen and Li, Yiming and Yao, Hongwei and Huang, Enhao and Shao, Shuo and Wang, Yuyi and Wang, Zhibo and Tao, Dacheng and Qin, Zhan},
  journal={arXiv preprint arXiv:2509.03117},
  year={2025}
}

@inproceedings{pape2025prompt,
  title={Prompt obfuscation for large language models},
  author={Pape, David and Mavali, Sina and Eisenhofer, Thorsten and Sch{\"o}nherr, Lea},
  booktitle={34th USENIX Security Symposium (USENIX Security 25)},
  pages={2323--2342},
  year={2025}
}

@article{karvandi2024reversing,
  title={The reversing machine: reconstructing memory assumptions},
  author={Karvandi, Mohammad Sina and Meghdadizanjani, Soroush and Arasteh, Sima and Monfared, Saleh Khalaj and Fallah, Mohammad K and Gorgin, Saeid and Lee, Jeong-A and van der Kouwe, Erik},
  journal={arXiv preprint arXiv:2405.00298},
  year={2024}
}

@article{costan2016intel,
  title={Intel SGX explained},
  author={Costan, Victor and Devadas, Srinivas},
  journal={Cryptology ePrint Archive},
  year={2016}
}

@online{kubernetes_encrypt_data_at_rest,
  title        = {Encrypting Confidential Data at Rest},
  author       = {{The Kubernetes Authors}},
  year         = {2025},
  month        = may,
  url          = {https://kubernetes.io/docs/tasks/administer-cluster/encrypt-data/},
  note         = {Last modified May 09, 2025. Accessed 2025-12-25}
}

@online{openai_api_reference_introduction,
  title        = {API Reference: Introduction},
  author       = {{OpenAI}},
  year         = {2025},
  url          = {https://platform.openai.com/docs/api-reference/introduction},
  note         = {Accessed 2025-12-25}
}

@online{google_gemini_api_models,
  title        = {Gemini models | Gemini API | Google AI for Developers},
  author       = {{Google}},
  year         = {2025},
  url          = {https://ai.google.dev/gemini-api/docs/models},
  note         = {Last updated 2025-12-18 (UTC). Accessed 2025-12-25}
}

@techreport{nvidia_hcc_whitepaper_wp11459,
  title        = {Confidential Compute on NVIDIA Hopper H100},
  author       = {Nertney, Rob},
  institution  = {NVIDIA},
  number       = {WP-11459-001},
  year         = {2023},
  month        = jul,
  date         = {2023-07-25},
  url          = {https://images.nvidia.com/aem-dam/en-zz/Solutions/data-center/HCC-Whitepaper-v1.0.pdf},
  note         = {Version 1.0. Accessed 2025-12-25}
}

@inproceedings{lin2025emojiprompt,
  title={Emojiprompt: Generative prompt obfuscation for privacy-preserving communication with cloud-based llms},
  author={Lin, Sam and Hua, Wenyue and Wang, Zhenting and Jin, Mingyu and Fan, Lizhou and Zhang, Yongfeng},
  booktitle={Proceedings of the 2025 Conference of the Nations of the Americas Chapter of the Association for Computational Linguistics: Human Language Technologies (Volume 1: Long Papers)},
  pages={12342--12361},
  year={2025}
}

@article{gim2024confidential,
  title={Confidential prompting: Protecting user prompts from cloud llm providers},
  author={Gim, In and Li, Caihua and Zhong, Lin},
  journal={arXiv preprint arXiv:2409.19134},
  year={2024}
}

@inproceedings{yuan2025scx,
  title={SCX: Stateless KV-Cache Encoding for Cloud-Scale Confidential Transformer Serving},
  author={Yuan, Mu and Zhang, Lan and Zeng, Liekang and Jiang, Siyang and Yang, Bufang and Duan, Di and Xing, Guoliang},
  booktitle={Proceedings of the ACM SIGCOMM 2025 Conference},
  pages={39--54},
  year={2025}
}

@inproceedings{su2025runtime,
  title={Runtime Attestation for Secure LLM Serving in Cloud-Native Trusted Execution Environments},
  author={Su, Jianchang and Zhang, Wei},
  booktitle={Machine Learning for Computer Architecture and Systems 2025}
}

@article{rana2023kms_survey,
  title   = {A comprehensive survey of cryptography key management systems},
  author  = {Rana, Subhabrata and Khoda Parast, Fatemeh and Kelly, Brett and Wang, Yang and Kent, Kenneth B.},
  journal = {Journal of Information Security and Applications},
  year    = {2023},
  volume  = {74},
  pages   = {103607},
  doi     = {10.1016/j.jisa.2023.103607}
}

@misc{cursor_features,
  title        = {Features · Cursor},
  author       = {{Cursor}},
  year         = {2025},
  url          = {https://cursor.com/features},
  urldate      = {2025-12-29}
}

@misc{friedman2021copilot,
  title        = {Introducing {GitHub} Copilot: your {AI} pair programmer},
  author       = {Nat Friedman},
  year         = {2021},
  month        = jun,
  url          = {https://github.com/features/copilot},
  note         = {Updated Feb 23, 2022},
  urldate      = {2025-12-29}
}

@misc{zapier_agents,
  title        = {Build {AI} teammates with Zapier Agents},
  author       = {{Zapier}},
  year         = {2025},
  url          = {https://zapier.com/agents},
  urldate      = {2025-12-29}
}

@misc{spector2025zapier_agents_guide,
  title        = {Zapier Agents: Work hand in hand with {AI} agents},
  author       = {Steph Spector},
  year         = {2025},
  month        = nov,
  url          = {https://zapier.com/blog/zapier-agents-guide/},
  note         = {Most recently updated Nov 2025},
  urldate      = {2025-12-29}
}

@misc{openai_text_guide_instructions,
  title        = {Text generation (OpenAI API Guide)},
  author       = {{OpenAI}},
  year         = {2025},
  url          = {https://platform.openai.com/docs/guides/text},
  urldate      = {2025-12-29}
}

@misc{anthropic_system_prompts,
  title        = {Giving Claude a role with a system prompt (Claude Docs)},
  author       = {{Anthropic}},
  year         = {2025},
  url          = {https://platform.claude.com/docs/en/build-with-claude/prompt-engineering/system-prompts},
  urldate      = {2025-12-29}
}

@misc{openai_chatgpt_2022,
  title        = {Introducing ChatGPT},
  author       = {{OpenAI}},
  year         = {2022},
  month        = nov,
  url          = {https://openai.com/index/chatgpt/},
  urldate      = {2025-12-29}
}

@article{gemini_2023,
  title={Gemini: a family of highly capable multimodal models},
  author={Team, Gemini and Anil, Rohan and Borgeaud, Sebastian and Alayrac, Jean-Baptiste and Yu, Jiahui and Soricut, Radu and Schalkwyk, Johan and Dai, Andrew M and Hauth, Anja and Millican, Katie and others},
  journal={arXiv preprint arXiv:2312.11805},
  year={2023}
}

@inproceedings{yang2025prsa,
  title={$\{$PRSA$\}$: Prompt stealing attacks against $\{$Real-World$\}$ prompt services},
  author={Yang, Yong and Li, Changjiang and Li, Qingming and Ma, Oubo and Wang, Haoyu and Wang, Zonghui and Gao, Yandong and Chen, Wenzhi and Ji, Shouling},
  booktitle={34th USENIX security symposium (USENIX Security 25)},
  pages={2283--2302},
  year={2025}
}

@article{sahoo2024prompt_survey,
  title        = {A Systematic Survey of Prompt Engineering in Large Language Models: Techniques and Applications},
  author       = {Pranab Sahoo and Ayush Kumar Singh and Sriparna Saha and Vinija Jain and Samrat Mondal and Aman Chadha},
  journal      = {arXiv preprint arXiv:2402.07927},
  year         = {2024},
  url          = {https://arxiv.org/abs/2402.07927}
}

@article{nist_sp800_145,
  title={The nist definition of cloud computing},
  author={Cloud, Hybrid},
  journal={National institute of science and technology, special publication},
  volume={800},
  number={2011},
  pages={145},
  year={2011}
}

@inproceedings{hui2024pleak,
  title={Pleak: Prompt leaking attacks against large language model applications},
  author={Hui, Bo and Yuan, Haolin and Gong, Neil and Burlina, Philippe and Cao, Yinzhi},
  booktitle={Proceedings of the 2024 on ACM SIGSAC Conference on Computer and Communications Security},
  pages={3600--3614},
  year={2024}
}

@inproceedings{wang2024raccoon,
  title        = {Raccoon: Prompt Extraction Benchmark of LLM-Integrated Applications},
  author       = {Junlin Wang and Tianyi Yang and Roy Xie and Bhuwan Dhingra},
  booktitle    = {Findings of the Association for Computational Linguistics: ACL 2024},
  year         = {2024},
  pages        = {13349--13365},
  url          = {https://aclanthology.org/2024.findings-acl.791/}
}

@article{wang2024survey_llm_agents,
  title   = {A survey on large language model based autonomous agents},
  author  = {Wang, Lei and Ma, Chen and Feng, Xueyang and Zhang, Zeyu and Yang, Hao and Zhang, Jingsen and Chen, Zhiyuan and Tang, Jiakai and Chen, Xu and Lin, Yankai and Zhao, Wayne Xin and Wei, Zhewei and Wen, Ji-Rong},
  journal = {Frontiers of Computer Science},
  year    = {2024},
  doi     = {10.1007/s11704-024-40231-1}
}

@misc{ManusAI_Website_2026,
  title        = {Manus: Hands On AI},
  author       = {{Manus AI}},
  year         = {2026},
  howpublished = {\url{https://manus.im/}},
  note         = {Accessed: 2026-01-08}
}

@inproceedings{mcgovern2025fingerprints,
  title     = {Your Large Language Models Are Leaving Fingerprints},
  author    = {McGovern, H. E. and Stureborg, R. and Suhara, Y. and others},
  booktitle = {Proceedings of the 1st Workshop on GenAI Content Detection (GenAIDetect)},
  year      = {2025},
  pages     = {85--95}
}

@article{rastrigin1963convergence,
  title={The convergence of the random search method in the extremal control of a many parameter system},
  author={Rastrigin, LA},
  journal={Automaton \& Remote Control},
  volume={24},
  pages={1337--1342},
  year={1963}
}

@inproceedings{andriushchenko2025jailbreaking,
title={Jailbreaking Leading Safety-Aligned {LLM}s with Simple Adaptive Attacks},
author={Maksym Andriushchenko and Francesco Croce and Nicolas Flammarion},
booktitle={The Thirteenth International Conference on Learning Representations},
year={2025},
url={https://openreview.net/forum?id=hXA8wqRdyV}
}

@article{lee2024human,
  title={A human-inspired reading agent with gist memory of very long contexts},
  author={Lee, Kuang-Huei and Chen, Xinyun and Furuta, Hiroki and Canny, John and Fischer, Ian},
  journal={arXiv preprint arXiv:2402.09727},
  year={2024}
}

@article{xu2025mem,
  title={A-mem: Agentic memory for llm agents},
  author={Xu, Wujiang and Liang, Zujie and Mei, Kai and Gao, Hang and Tan, Juntao and Zhang, Yongfeng},
  journal={arXiv preprint arXiv:2502.12110},
  year={2025}
}

@article{liu2025lessons,
  title={Lessons Learned: A Multi-Agent Framework for Code LLMs to Learn and Improve},
  author={Liu, Yuanzhe and Deng, Ryan and Kaler, Tim and Chen, Xuhao and Leiserson, Charles E and Ma, Yao and Chen, Jie},
  journal={arXiv preprint arXiv:2505.23946},
  year={2025}
}

@article{chen2021evaluating,
  title={Evaluating large language models trained on code},
  author={Chen, Mark},
  journal={arXiv preprint arXiv:2107.03374},
  year={2021}
}

@article{austin2021program,
  title={Program synthesis with large language models},
  author={Austin, Jacob and Odena, Augustus and Nye, Maxwell and Bosma, Maarten and Michalewski, Henryk and Dohan, David and Jiang, Ellen and Cai, Carrie and Terry, Michael and Le, Quoc and others},
  journal={arXiv preprint arXiv:2108.07732},
  year={2021}
}

@article{kovcisky2018narrativeqa,
  title={The narrativeqa reading comprehension challenge},
  author={Ko{\v{c}}isk{\`y}, Tom{\'a}{\v{s}} and Schwarz, Jonathan and Blunsom, Phil and Dyer, Chris and Hermann, Karl Moritz and Melis, G{\'a}bor and Grefenstette, Edward},
  journal={Transactions of the Association for Computational Linguistics},
  volume={6},
  pages={317--328},
  year={2018},
  publisher={MIT Press One Rogers Street, Cambridge, MA 02142-1209, USA journals-info~…}
}

@inproceedings{pang2022quality,
  title={QuALITY: Question answering with long input texts, yes!},
  author={Pang, Richard Yuanzhe and Parrish, Alicia and Joshi, Nitish and Nangia, Nikita and Phang, Jason and Chen, Angelica and Padmakumar, Vishakh and Ma, Johnny and Thompson, Jana and He, He and others},
  booktitle={Proceedings of the 2022 Conference of the North American Chapter of the Association for Computational Linguistics: Human Language Technologies},
  pages={5336--5358},
  year={2022}
}

@article{maharana2024evaluating,
  title={Evaluating very long-term conversational memory of llm agents},
  author={Maharana, Adyasha and Lee, Dong-Ho and Tulyakov, Sergey and Bansal, Mohit and Barbieri, Francesco and Fang, Yuwei},
  journal={arXiv preprint arXiv:2402.17753},
  year={2024}
}

@article{kim2024dialsim,
  title={DialSim: A Real-Time Simulator for Evaluating Long-Term Multi-Party Dialogue Understanding of Conversation Systems},
  author={Kim, Jiho and Chay, Woosog and Hwang, Hyeonji and Kyung, Daeun and Chung, Hyunseung and Cho, Eunbyeol and Jo, Yohan and Choi, Edward},
  journal={arXiv preprint arXiv:2406.13144},
  year={2024}
}

@article{chang2024efficient,
  title={Efficient prompting methods for large language models: A survey},
  author={Chang, Kaiyan and Xu, Songcheng and Wang, Chenglong and Luo, Yingfeng and Liu, Xiaoqian and Xiao, Tong and Zhu, Jingbo},
  journal={arXiv preprint arXiv:2404.01077},
  year={2024}
}

@inproceedings{li2025prompt,
  title={Prompt compression for large language models: A survey},
  author={Li, Zongqian and Liu, Yinhong and Su, Yixuan and Collier, Nigel},
  booktitle={Proceedings of the 2025 Conference of the Nations of the Americas Chapter of the Association for Computational Linguistics: Human Language Technologies (Volume 1: Long Papers)},
  pages={7182--7195},
  year={2025}
}

@article{shin2020autoprompt,
  title={Autoprompt: Eliciting knowledge from language models with automatically generated prompts},
  author={Shin, Taylor and Razeghi, Yasaman and Logan IV, Robert L and Wallace, Eric and Singh, Sameer},
  journal={arXiv preprint arXiv:2010.15980},
  year={2020}
}

@inproceedings{singh2023explaining,
  title={Explaining data patterns in natural language with language models},
  author={Singh, Chandan and Morris, John X and Aneja, Jyoti and Rush, Alexander M and Gao, Jianfeng},
  booktitle={Proceedings of the 6th BlackboxNLP Workshop: Analyzing and Interpreting Neural Networks for NLP},
  pages={31--55},
  year={2023}
}

@inproceedings{deng2022rlprompt,
  title={Rlprompt: Optimizing discrete text prompts with reinforcement learning},
  author={Deng, Mingkai and Wang, Jianyu and Hsieh, Cheng-Ping and Wang, Yihan and Guo, Han and Shu, Tianmin and Song, Meng and Xing, Eric and Hu, Zhiting},
  booktitle={Proceedings of the 2022 conference on empirical methods in natural language processing},
  pages={3369--3391},
  year={2022}
}

@incollection{schwefel1977evolutionsstrategien,
  title={Evolutionsstrategien f{\"u}r die numerische optimierung},
  author={Schwefel, Hans-Paul},
  booktitle={Numerische Optimierung von Computer-Modellen Mittels der Evolutionsstrategie: Mit Einer Vergleichenden Einf{\"u}hrung in Die Hill-Climbing-und Zufallsstrategie},
  pages={123--176},
  year={1977},
  publisher={Springer}
}

@article{salimans2017evolution,
  title={Evolution strategies as a scalable alternative to reinforcement learning},
  author={Salimans, Tim and Ho, Jonathan and Chen, Xi and Sidor, Szymon and Sutskever, Ilya},
  journal={arXiv preprint arXiv:1703.03864},
  year={2017}
}

@inproceedings{lcvenshtcin1966binary,
  title={Binary coors capable or ‘correcting deletions, insertions, and reversals},
  author={Lcvenshtcin, VI},
  booktitle={Soviet physics-doklady},
  volume={10},
  number={8},
  year={1966}
}

@article{nikolaou2025language,
  title={Language models are injective and hence invertible},
  author={Nikolaou, Giorgos and Mencattini, Tommaso and Crisostomi, Donato and Santilli, Andrea and Panagakis, Yannis and Rodol{\`a}, Emanuele},
  journal={arXiv preprint arXiv:2510.15511},
  year={2025}
}

@article{lester2021power,
  title={The power of scale for parameter-efficient prompt tuning},
  author={Lester, Brian and Al-Rfou, Rami and Constant, Noah},
  journal={arXiv preprint arXiv:2104.08691},
  year={2021}
}

@article{zhang2024agent,
  title={Agent-pro: Learning to evolve via policy-level reflection and optimization},
  author={Zhang, Wenqi and Tang, Ke and Wu, Hai and Wang, Mengna and Shen, Yongliang and Hou, Guiyang and Tan, Zeqi and Li, Peng and Zhuang, Yueting and Lu, Weiming},
  journal={arXiv preprint arXiv:2402.17574},
  year={2024}
}

@misc{openai_gpt4o_2024,
  author       = {{OpenAI}},
  title        = {Hello GPT-4o},
  year         = {2024},
  month        = may,
  url          = {https://openai.com/index/hello-gpt-4o/},
  note         = {Accessed: 2026-01-28}
}

@misc{deepseek_chat_web,
  author       = {{DeepSeek}},
  title        = {DeepSeek Chat},
  year         = {n.d.},
  url          = {https://chat.deepseek.com/},
  note         = {Accessed: 2026-01-28}
}
\bibliographystyle{icml2026}

\newpage
\appendix
\onecolumn

\section*{Appendix}

\begin{itemize}[itemsep=-1pt,label={$\blacktriangleright$}]
  \item \textbf{\hyperref[app:theoretical_motivation]{A} \hyperref[app:theoretical_motivation]{Discussions on Theoretical Motivation for \ourshort{}}} \dotfill \pageref{app:theoretical_motivation}
  \item \textbf{\hyperref[app:main_experiment_details]{B} \hyperref[app:main_experiment_details]{Details on Main Experiment}} \dotfill \pageref{app:main_experiment_details}
  \item \textbf{\hyperref[app:details-on-case-study]{C} \hyperref[app:details-on-case-study]{Details on Case Study}} \dotfill \pageref{app:details-on-case-study}
  \item \textbf{\hyperref[app:adaptive_attack_details]{D} \hyperref[app:adaptive_attack_details]{Details on Adaptive Attacks}} \dotfill \pageref{app:adaptive_attack_details}
    \begin{itemize}[label={$\rhd$}, topsep=-1pt, itemsep=-1pt]
      \item \hyperref[subapp:llm_assisted_recovery_details]{D.1} \hyperref[subapp:llm_assisted_recovery_details]{LLM-Assisted Prompt Recovery.} \dotfill \pageref{subapp:llm_assisted_recovery_details}
      \item \hyperref[subapp:deobfuscation_details]{D.2} \hyperref[subapp:deobfuscation_details]{Deobfuscation Attack.} \dotfill \pageref{subapp:deobfuscation_details}
    \end{itemize}
  \item \textbf{\hyperref[app:discussions-on-methodology]{E} \hyperref[app:discussions-on-methodology]{Discussions on \ourshort{} Methodology}} \dotfill \pageref{app:discussions-on-methodology}
\end{itemize}
\clearpage

\begin{table}[t]
\caption{Glossary.}
\label{tab:glossary}
\centering
\setlength{\tabcolsep}{4pt}
\footnotesize
\begin{tabular}{ll}
\toprule
Symbol & Meaning \\
\midrule
$d \in \R$ & Model dimension. \\
$\mathcal{V}$ & Model vocabulary. \\
$\mathcal{P} \subset \{ \mathcal{V}, \mathcal{V}^2, \dots \}$ & Prompt space. \\
$\rvx \in \mathcal{P}$ & Original agent prompt. \\
$\tilde{\rvx} \in \mathcal{P}$ & Obfuscated prompt for target prompt $\rvx$. \\
$\mathrm{Embed}(\cdot)$ & Embedding layer. \\
$\rvh \in \R^d$ & Model embeddings; $\rvh = \mathrm{Embed}(\rvx)$. \\
$\bm{\delta}$ & Perturbation to embeddings. \\
$\rvq$ & An instance of task query, often used together with the agent prompt. \\
$\mathcal{Q}$ & Set of task queries. \\
$f(\cdot) \in \R^{\vert \mathcal{V} \vert}$ & Model output logits. \\
$y \in \mathcal{V}$ & An output token; $y \sim f(\cdot)$. \\
$\theta$ & Model parameters. \\
$m(\cdot)$ & Correct-class margin function. \\
$S_\rvx$ & Stability region of prompt $\rvx$. \\
$B_\epsilon(\rvh)$ & $\epsilon$-ball of embeddings $\rvh$; $B_\epsilon(\rvh) = \{ \rvh' \vert \| \rvh' - \rvh \|_2 < \epsilon \}$. \\
$d(\cdot, \cdot)$ & Distance between two prompts; edit distance by default. \\
$\mathrm{Init}(\cdot)$ & Prompt initialization for obfuscation optimization. \\
$\mathcal{L}(\cdot)$ & Objective function for obfuscation optimization. \\
\bottomrule
\end{tabular}
\end{table}

\section{Discussions on Theoretical Motivation for \ourshort{}} \label{app:theoretical_motivation}
In this section, we provide details that extend our discussions in \cref{subsec:theoretical_motivation}.
We will show the existence of obfuscated prompts that satisfy our criteria of obfuscation, utility and non-portability.

\mypar{Roadmap.}
We start with defining the notion of ``functional equivalence'' in the setting of single-token generation (\textit{utility of embeddings}).
We show that there always exists embeddings that is different from yet functionally equivalent to embeddings of the original prompt (\textit{local obfuscation and utility}).
We then show that an obfuscated prompt satisfying the utility constraint often fails to transfer to another model with parameters different from the target model (\textit{non-portability}).
Finally, we extend our theoretical results to the scenario of open-ended, autoregressive generation.

Let output logits of the target model be $f(\cdot;\theta)$.
Let original agent prompt embeddings be $\mathbf{h} = \mathrm{Embed}(\mathbf{x}) \in \R^{n \times d}$, i.e. embeddings with length $n \coloneqq \vert \rvx \vert$ and dimension $d$.
The obfuscated embeddings ($\tilde{\rvh}$) can be obtained by adding noise $\bm{\delta}$ to $\rvh$, such that $\tilde{\rvh} = \rvh+\bm{\delta}$.
We assume $\tilde{\rvx}$ has the same length as $\rvx$, which can always be achieved via padding either $\tilde{\rvx}$ or $\rvx$ at initialization.

For the moment, we focus on embeddings in continuous space, not prompts in discrete space.

By saying that embeddings $\tilde{\rvh}$ is functionally equivalent to $\rvh$, we refer to sampled model responses: for $\forall \rvq_i$, $\tilde{y}_i \sim f(\cdot \vert \tilde{\rvh},\rvq_i), y_i \sim f(\cdot \vert \rvh,\rvq_i)$ where $\tilde{y}_i, y_i \in \mathcal{V}$, we have $\tilde{y}_i = y_i$.
We adopt a relaxed definition of functional equivalence based on discrete output behavior and restrict sampling strategy to greedy decoding:
\begin{definition}[Functional equivalence of embeddings]\label{def:functional_equivalence_of_embeddings}
We say the embeddings $\tilde{\rvh}$ and $\rvh$ of two prompts ($\rvx$, $\tilde{\rvx}$) are functionally equivalent with respect to a query $\rvq_i$ if the same outputs are obtained:
\begin{equation}
\tilde{y}_i = \underset{y}{\argmax} f(y \vert \tilde{\mathbf{h}},\mathbf{q}_i),
y_i = \underset{y}{\argmax} f(y \vert \mathbf{h},\mathbf{q}_i).
\end{equation}
\end{definition}
This definition is less restrictive and more practical than representation-level equivalence, since in practice we do not care about hidden states, only final discrete outputs.

Following our definition of functional equivalence, we now establish the conditions under which such equivalent prompts exist and satisfy our objectives.

\mypar{Local obfuscation and utility.}
To prove that we can find an obfuscated prompt  that preserves utility, we first establish the geometric properties of the model's decision boundary in the embedding space.

\begin{lemma}[Equivalent condition for functional equivalence]
Let $y_i = \argmax_y f(y \vert \rvh, \rvq_i)$ be the target output for the original prompt. For an obfuscated embedding $\tilde{\rvh}$ to be functionally equivalent to $\rvh$, it is sufficient that the correct-class margin is strictly positive:

$$m(\tilde{\rvh}, \rvq_i, y_i) \coloneqq f(\tilde{\rvh}, \rvq_i)_{y_i} - \max_{k \neq y_i} f(\tilde{\rvh}, \rvq_i)_k > 0$$
\end{lemma}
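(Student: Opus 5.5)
This is a direct unpacking of the definition of $\argmax$ over the finite vocabulary $\mathcal{V}$. I will show that positive margin forces $y_i$ to be the unique maximizer of $f(\tilde{\rvh},\rvq_i)$. Then the argmax for $\tilde{\rvh}$ coincides with $y_i$, the argmax for $\rvh$, which is exactly Definition~\ref{def:functional_equivalence_of_embeddings}.

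\textbf{Step 1: the margin is well defined.} Since $\mathcal{V}$ is finite and $|\mathcal{V}|\ge 2$, the quantity $\max_{k\neq y_i} f(\tilde{\rvh},\rvq_i)_k$ is a maximum over a finite nonempty set. It is therefore attained by some $k^\star\neq y_i$, and $m$ is a real number.

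\textbf{Step 2: $y_i$ is the unique maximizer.} Assume $m(\tilde{\rvh},\rvq_i,y_i)>0$. For every $k\neq y_i$ we then have
\begin{equation*}
f(\tilde{\rvh},\rvq_i)_k \le \max_{k'\neq y_i} f(\tilde{\rvh},\rvq_i)_{k'} < f(\tilde{\rvh},\rvq_i)_{y_i}.
\end{equation*}
So $y_i$ strictly dominates every other token, and the argmax set of $f(\cdot\mid\tilde{\rvh},\rvq_i)$ is the singleton $\{y_i\}$.

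\textbf{Step 3: conclude.} Greedy decoding gives $\tilde{y}_i=\argmax_y f(y\mid\tilde{\rvh},\rvq_i)=y_i=\argmax_y f(y\mid\rvh,\rvq_i)$, which is functional equivalence with respect to $\rvq_i$.

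The one point needing care is tie-breaking. For the original prompt, $\argmax$ might not be unique, in which case $y_i$ is understood as the token selected by a fixed deterministic tie-breaking rule. For the obfuscated prompt, strict positivity of the margin removes any tie, so the result does not depend on the rule. It is also worth stating that the condition is sufficient but not necessary: with $m=0$ and a favorable tie-break, equivalence can still hold. That is why the lemma asserts only sufficiency.

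Finally, I would record the consequence used later. As the paper notes, $f(\cdot;\theta)$ is continuous in its embedding argument, and $\max$ over finitely many continuous functions is continuous, so $m$ is continuous. Hence the strict inequality $m>0$ persists on a neighborhood of $\rvh$. This prepares the $\epsilon$-ball argument inside the stability region $S_{\rvx}$ used for the existence theorem.
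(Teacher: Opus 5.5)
Your proof is correct and is the same argument as the paper's: a positive margin makes $f(\tilde{\rvh},\rvq_i)_{y_i}$ strictly exceed every other logit, so greedy decoding returns $y_i$, which matches the original output. Your added points about the finite vocabulary and tie-breaking are sound. One caveat on your closing aside: the neighborhood claim needs $m(\rvh,\rvq_i,y_i)>0$ at the original point, which fails in exactly the tie case you raise; the paper avoids this by assuming the model is confident there.
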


\begin{proof}
If $m(\tilde{\rvh}, \rvq_i, y_i) > 0$, then $f(\tilde{\rvh}, \rvq_i)_{y_i} > f(\tilde{\rvh}, \rvq_i)_k$ for all $k \neq y_i$. Consequently, the argmax operation under greedy decoding yields $\tilde{y}_i = y_i$, satisfying \cref{def:functional_equivalence_of_embeddings}.
\end{proof}

We next show that this condition holds not just for a single point, but for a region surrounding .

\begin{lemma}[Continuity of the margin function]
The margin function $m(\rvh, \rvq_i, y_i)$ is continuous with respect to the input embeddings $\rvh$.
\end{lemma}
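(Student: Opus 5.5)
The plan is to write the margin as a composition of continuous maps and use the fact that compositions, finite maxima and differences of continuous functions are continuous. Fix the query $\rvq_i$ and the target token $y_i$. View the logit map $\rvh \mapsto f(\rvh, \rvq_i;\theta) \in \R^{\vert\mathcal{V}\vert}$ as a map from $\R^{n\times d}$, since the query embeddings are held fixed.

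First I would show that this logit map is continuous. A transformer forward pass is a finite composition of the following pieces:
\begin{itemize}
\item concatenation with the fixed query embeddings, plus positional encodings (an affine map);
\item linear projections for queries, keys and values;
\item scaled dot products, which are polynomial in the inputs;
\item softmax, which is continuous because $\exp$ is continuous and the denominator is strictly positive, including under causal masking;
\item multiplication of attention weights with values;
\item residual additions;
\item layer normalization, which is continuous because the denominator $\sqrt{\mathrm{Var}+\varepsilon}$ is bounded below by $\sqrt{\varepsilon}>0$;
\item elementwise activations such as GeLU, SiLU or ReLU, all continuous;
\item the final unembedding, which is linear.
\end{itemize}
Induction over layers then gives continuity of $f$ in $\rvh$. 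Each coordinate $\rvh \mapsto f(\rvh,\rvq_i)_k$ is continuous because it is the logit map followed by a coordinate projection.

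Next I would handle the max term. The function $g(\rvh) = \max_{k\neq y_i} f(\rvh,\rvq_i)_k$ is a maximum over the finite set $\mathcal{V}\setminus\{y_i\}$. Its continuity follows by induction from $\max(a,b) = \tfrac12(a+b+\vert a-b\vert)$, which is continuous in $(a,b)$. Alternatively, one can use that $\max$ is 1-Lipschitz in the $\ell_\infty$ norm: $\vert g(\rvh)-g(\rvh')\vert \le \|f(\rvh)-f(\rvh')\|_\infty$. Then $m = f_{y_i} - g$ is a difference of continuous functions and is therefore continuous.

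The main step that needs care is the normalization layers. If LayerNorm or RMSNorm is implemented without the $\varepsilon$ stabilizer, it is discontinuous where the variance vanishes. I would make explicit the assumption $\varepsilon>0$, which holds for standard implementations. I would also note that the Lipschitz route yields a stronger local statement that feeds directly into the stability-region argument: on any bounded neighbourhood of $\rvh$, $m$ is Lipschitz with some constant $L$. This follows because each of the components listed above is locally Lipschitz. Consequently, if $m(\rvh,\rvq_i,y_i)>0$, the ball $B_{\epsilon}(\rvh)$ with $\epsilon < m/L$, intersected over the finitely many $\rvq_i\in\mathcal{Q}$, lies in $S_{\rvx}$.
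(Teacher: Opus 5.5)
Your proposal is correct and follows the same route as the paper: the logit map is a composition of continuous layers, and the margin is then obtained from its coordinates by a finite maximum and a subtraction, both of which are continuous. You carry this out more carefully than the paper does, making explicit the $\varepsilon>0$ stabilizer in LayerNorm/RMSNorm and the continuity of the finite max; like the paper, you implicitly assume a dense architecture, since hard top-$k$ expert routing in mixture-of-experts models would break global continuity.
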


\begin{proof}
The neural network $f(\cdot; \theta)$ is a composition of continuous functions (linear transformations, layer normalizations, and activations like GeLU/Softmax). The operations $\max(\cdot)$ and subtraction are also continuous. Therefore, their composition $m(\cdot)$ is continuous with respect to $\rvh$.
\end{proof}

Using continuity, we define the region in embedding space where utility is preserved.

\begin{definition}[Stability region]
We define the stability region $S_{\rvx}$ as the set of all embeddings that maintain positive margins for the target outputs across a query set $\mathcal{Q}$:

$$S_{\rvx} = \{ \rvh' \mid \forall \rvq_i \in \mathcal{Q}, m(\rvh', \rvq_i, y_i) > 0 \}$$
\end{definition}

Since $m(\rvh, \dots) > 0$ for the original prompt (assuming the model is confident) and $m$ is continuous, there exists an $\epsilon > 0$ such that the open ball $B_{\epsilon}(\rvh) \subset S_{\rvx}$. Any embedding vector within this ball preserves utility.

\begin{theorem}[Existence of obfuscated prompts] \label{thm:existence_of_obfuscated_prompts}
Given a task prompt $\rvx$, there exists a prompt $\tilde{\rvx} \neq \rvx$ such that:

1. $\tilde{\rvx}$ is functionally equivalent to $\rvx$ (Utility).
2. The distance in prompt space $d(\tilde{\rvx}, \rvx) \geq d_0$ for some constant $d_0 > 0$ (Obfuscation).
\end{theorem}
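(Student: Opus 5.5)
The plan is to combine the two lemmas just established with a token-level perturbation argument that mirrors the proof sketch in \cref{subsec:theoretical_motivation}. First, assume the original prompt has positive margin, $m(\rvh,\rvq_i,y_i)>0$ for every $\rvq_i\in\mathcal{Q}$, with $\mathcal{Q}$ finite. By the continuity lemma, each margin stays positive on some ball around $\rvh$. Intersecting these finitely many balls gives one $\epsilon>0$ with $B_\epsilon(\rvh)\subset S_{\rvx}$. By the equivalent-condition lemma, any prompt whose embedding lies in $B_\epsilon(\rvh)$ is then functionally equivalent to $\rvx$ on $\mathcal{Q}$. This settles utility in continuous space.

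The remaining task is to realize such a perturbation with a \emph{discrete} prompt $\tilde{\rvx}$ that is at edit distance at least $d_0$ from $\rvx$. I would work with same-length prompts, padded as in the setup, so that $d(\tilde{\rvx},\rvx)$ is at least the number of substituted positions. If $\tilde{\rvx}$ replaces the tokens at a set $\mathcal{K}$ of positions, the embedding shift is $\bm{\delta}$, supported on the rows in $\mathcal{K}$, with
\begin{equation}
\|\bm{\delta}\|\le\sum_{j\in\mathcal{K}}\|\mathrm{Embed}(\tilde x_j)-\mathrm{Embed}(x_j)\|.
\end{equation}
Taking $|\mathcal{K}|=k\ge d_0$ gives the obfuscation bound, and the proof reduces to finding substitutions with $\sum_{j\in\mathcal{K}}\|\bm{\delta}_j\|<\epsilon$.

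This step is the main obstacle. The vocabulary is finite, so for a fixed $\epsilon$ there may be no token whose embedding lies within $\epsilon/k$ of $\mathrm{Embed}(x_j)$. A naive ball argument in the input embedding space therefore does not close. I would first try the attention-dilution route: measure the sensitivity at the level of the margin rather than the raw embedding. Concretely, I would show that for some positions $j$ the margin is nearly constant along the direction of the $j$-th row, for example because every attention weight onto position $j$ is at most some $\alpha$ at all layers, which bounds the Lipschitz constant of $m$ in row $j$ by $\alpha L$. Then substituting an arbitrary token there changes $m$ by at most $\alpha L\,\mathrm{diam}(\mathrm{Embed}(\mathcal{V}))$. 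Summing over $k$ such positions, utility is preserved as long as $k\,\alpha L\,\mathrm{diam}(\mathrm{Embed}(\mathcal{V}))<\min_i m(\rvh,\rvq_i,y_i)$.

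A cleaner alternative, which I would keep as a fallback, is to lengthen the prompt by appending $k$ filler tokens at positions that receive negligible attention, again with padding. This still gives $d\ge k$, and the same Lipschitz bound yields positive margins. Either route makes the result conditional on an explicit low-sensitivity (dilution) assumption. I would state that hypothesis explicitly as part of the theorem rather than treat it as automatic.

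Finally, to obtain $\tilde{\rvx}\neq\rvx$, note that $d_0>0$ already forces $\tilde{\rvx}$ to differ from $\rvx$.
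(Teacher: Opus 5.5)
Your skeleton is the paper's. A positive margin plus continuity gives a stability neighbourhood. The embedding shift decomposes over the substituted positions $\mathcal{K}$. Attention dilution is invoked to make $k=|\mathcal{K}|\ge d_0$ substitutions harmless, and $d_0>0$ forces $\tilde{\rvx}\neq\rvx$.

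The difference is the step you flagged. The paper asserts that, because some tokens have near-zero gradient, one can choose $k$ of them with $\sum_{j\in\mathcal{K}}\|\bm{\delta}_j\|<\epsilon$, i.e.\ keep $\tilde{\rvh}$ inside the isotropic ball $B_\epsilon(\rvh)$. That does not follow. Low sensitivity describes how $m$ depends on row $j$, not the size of the displacement $\|\bm{\delta}_j\|$. Moreover, $\epsilon$ is fixed before any token is chosen, so if $\epsilon$ is below the minimal distance between distinct token embeddings, the only discrete prompt in the ball is $\rvx$ itself.

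Your per-row Lipschitz bound leads to the condition $k\,\alpha L\,\mathrm{diam}(\mathrm{Embed}(\mathcal{V}))<\min_i m(\rvh,\rvq_i,y_i)$. This is the rigorous form of what the dilution heuristic actually supports, namely an anisotropic stability region. It makes the ball in your first paragraph unnecessary and gives an explicit ceiling on $d_0$. Its cost is an explicit hypothesis, and that cost is unavoidable. Consider a model whose greedy output on some query is a distinguished token iff the prompt is exactly $\rvx$: it admits no functionally equivalent $\tilde{\rvx}\neq\rvx$. So the paper's statement also needs an assumption of this kind; its proof only leaves it implicit.

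Two details need fixing.

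First, the bound $a_{ij}\le\alpha$ on the attention weights onto $j$ must hold uniformly over all intermediate inputs, not just at $\rvx$. That means along each substitution and with the other positions of $\mathcal{K}$ already replaced. The reason is that changing the key at $j$ can raise the attention position $j$ receives. Under that uniform bound, the Jacobian of the attention output at every position $i\neq j$ with respect to row $j$ is $O(a_{ij})$, and your telescoping sum over $\mathcal{K}$ goes through.

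Second, for Levenshtein distance (the paper's default $d$), equal-length strings satisfy $d\le$ Hamming distance, not $\ge$. For example, \texttt{abab} and \texttt{baba} differ in $4$ positions but have edit distance $2$. So your claim that $d(\tilde{\rvx},\rvx)$ is at least the number of substituted positions is false as written. Since your bound allows arbitrary replacement tokens, choose tokens that do not occur in $\rvx$. Each must then be produced by an insertion or a substitution, which gives $d\ge k$. Your appending fallback already gets $d\ge k$ from the length difference.
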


\begin{proof}
Let $\Delta \rvh$ be the perturbation in embedding space caused by modifying $\rvx$ to $\tilde{\rvx}$. We require $\rvh + \Delta \rvh \in B_{\epsilon}(\rvh)$ to satisfy utility.

The total perturbation can be decomposed into token-wise perturbations. Let $\tilde{\rvx}$ differ from $\rvx$ at a set of indices $\mathcal{K}$. The norm of the embedding shift is bounded by the sum of individual token shifts: $\|\Delta \rvh\| \leq \sum_{j \in \mathcal{K}} \|\bm{\delta}_j\|$.

Due to the attention mechanism in Transformer architectures, specific tokens often exhibit low sensitivity (gradient $\nabla_{\rvx_j} \mathcal{L} \approx 0$), particularly in long contexts (attention dilution). We can select a set of $k$ such tokens to replace or perturb such that $\sum_{j=1}^k \|\bm{\delta}_j\| < \epsilon$.

While the embedding perturbation is bounded within $\epsilon$ (preserving utility), the discrete prompt distance (e.g., edit distance) $d(\tilde{\rvx}, \rvx)$ increases with $k$. By choosing sufficiently many low-sensitivity tokens, we satisfy $d(\tilde{\rvx}, \rvx) \geq d_0$ while ensuring $\tilde{\rvh} \in S_{\rvx}$.
\end{proof}

\mypar{Non-portability.}
Finally, we address why this utility does not transfer to other models.

\begin{proposition}[Non-portability via manifold mismatch]
An obfuscated prompt $\tilde{\rvx}$ that satisfies utility on model $\theta$ is likely to fail on a distinct model $\theta'$.
\end{proposition}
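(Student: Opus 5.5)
Because ``likely to fail'' is informal, I would first turn the statement into a probabilistic claim and then prove that claim under an explicit genericity assumption. Fix the target parameters $\theta$ and an obfuscated embedding $\tilde{\rvh} = \rvh + \bm{\delta}$ with $\tilde{\rvh} \in S_{\rvx}(\theta)$. Theorem~\ref{thm:existence_of_obfuscated_prompts} guarantees such a $\bm{\delta}$ exists; the obfuscation constraint forces it to be large in prompt space and, after the discrete optimization, typically large in embedding space, say $\|\bm{\delta}\| \geq r$. The claim I would aim for is
\[
P_{\theta'}\bigl(\tilde{\rvh} \in S_{\rvx}(\theta')\bigr) \leq \exp(-c\,|\mathcal{Q}|) \ll 1 = \mathbf{1}[\tilde{\rvh} \in S_{\rvx}(\theta)],
\]
where the randomness is over a distinct model $\theta'$. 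The modelling assumption I would adopt is that the geometry of $\theta'$ near $\rvh$ is in generic position relative to $\bm{\delta}$: the perturbation was selected using only $\theta$-feedback, so, conditionally on $\bm{\delta}$, the margin gradients of $\theta'$ behave like vectors whose directions are independent of $\bm{\delta}$.

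\textbf{Step 1 (stability radius of $\theta'$).} For the original prompt, which is portable, assume $m(\rvh,\rvq_i,y_i;\theta') = \mu_i' > 0$. By the continuity lemma for the margin function, $\theta'$ has its own stability ball $B_{\epsilon'}(\rvh)$. Write $L'$ for the local Lipschitz constant of the margins of $\theta'$. Then $\epsilon'$ is of order $\min_i \mu_i'/L'$, and any $\bm{\delta}$ with $\|\bm{\delta}\| < \epsilon'$ would trivially port. The nontrivial regime is therefore $\|\bm{\delta}\| \geq r > \epsilon'$, which the obfuscation objective (maximizing $\mathcal{L}_{\text{dist}}$ and pushing away from natural language) is designed to reach.

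\textbf{Step 2 (first-order margin change under $\theta'$).} Expand
\[
m(\rvh+\bm{\delta},\rvq_i,y_i;\theta') \approx \mu_i' + \langle \nabla m_i', \bm{\delta}\rangle .
\]
Under the genericity assumption, $\langle \nabla m_i', \bm{\delta}\rangle$ is roughly symmetric, with scale $\|\nabla m_i'\|\,\|\bm{\delta}\|$ dominating $\mu_i'$ once $\|\bm{\delta}\| \gg \epsilon'$. So each query fails with probability bounded below by a constant $p > 0$.

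\textbf{Step 3 (combining queries).} Treat the events over distinct queries in $\mathcal{Q}$ as weakly dependent, for example via a bounded-dependence or Hoeffding-type concentration argument. This gives a survival probability of at most $(1-p)^{|\mathcal{Q}|}$. By contrast, under $\theta$ the probability is $1$ by construction. This is the claimed $P(\tilde{\rvh} \in S_{\rvx}(\theta')) \ll P(\tilde{\rvh} \in S_{\rvx}(\theta))$.

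\textbf{Main obstacle.} The hard step is Step 2's genericity assumption. It is exactly where intra-family models, such as the Qwen variants or a quantized copy, could violate it, because their gradients correlate with those of $\theta$. My first attempt would parametrize the correlation $\rho = \cos(\nabla m_i, \nabla m_i')$ and show that the failure probability degrades gracefully in $\rho$. The key fact is that $\bm{\delta}$ is pushed to the boundary of $S_{\rvx}(\theta)$, where $\theta$'s margins are near zero, so even moderate decorrelation ($\rho < 1$) flips the sign for a constant fraction of queries. I would also have to accept the first-order expansion as a heuristic rather than a rigorous bound, since $\|\bm{\delta}\|$ is not small. Consequently the result is best stated as a heuristic proposition under these assumptions, consistent with the paper's framing of the section as theoretical motivation.
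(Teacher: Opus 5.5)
\emph{Comparison with the paper.} Your route is not the paper's. The paper's proof is purely qualitative: $S_{\rvx}(\theta)$ and $S_{\rvx}(\theta')$ ``generally do not align,'' the $\bm{\delta}$ of Theorem~\ref{thm:existence_of_obfuscated_prompts} is only tuned to stay in $B_{\epsilon}(\rvh)\subset S_{\rvx}(\theta)$, and nothing guarantees $\rvh+\bm{\delta}\in S_{\rvx}(\theta')$. The probability gap is then simply asserted by appeal to the empirical curvature-sensitivity of adversarial perturbations.

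\emph{The main gap: Step 2.} Quantifying the claim is a reasonable aim, but Step 2 carries the argument and it fails under your own assumption. Suppose the direction of $\nabla m_i'$ is generic relative to $\bm{\delta}$ in the $D=nd$-dimensional embedding space. Then $\langle \nabla m_i', \bm{\delta}\rangle$ is symmetric, but it concentrates at scale $\|\nabla m_i'\|\,\|\bm{\delta}\|/\sqrt{D}$, not $\|\nabla m_i'\|\,\|\bm{\delta}\|$. So $\|\bm{\delta}\|\gg\epsilon'\approx\min_i\mu_i'/L'$ is nowhere near enough for it to dominate $\mu_i'$. You need $\|\bm{\delta}\|\gtrsim\sqrt{D}\,\mu_i'/\|\nabla m_i'\|$, i.e.\ at least about $\sqrt{D}\,\epsilon'$, which is far outside any regime where the linearization means anything. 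Throughout the window $\epsilon'\ll\|\bm{\delta}\|\ll\sqrt{D}\,\epsilon'$, your model predicts a per-query failure probability close to $0$, i.e.\ portability. This is the standard gap between robustness to worst-case and to random perturbations: genericity in high dimension argues \emph{for} transfer, not against it. A constant $p$ has to come from something you have not modelled. Examples would be $\bm{\delta}$ being correlated with directions that lower $\theta'$'s margins, or genuinely nonlinear behaviour at large $\|\bm{\delta}\|$.

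\emph{Step 3 and the setup.} Step 3 is also unsupported. Every query sees the same $\bm{\delta}$ and the same $\theta'$. The failure you are after (the perturbed prompt no longer conveys the instruction to $\theta'$) is shared across queries, so the failure events are strongly positively correlated, not weakly dependent. The bound $(1-p)^{|\mathcal{Q}|}$ is then unavailable. In the extreme case where queries fail or succeed together, survival is just $1-p\approx 1/2$, which is not ``$\ll 1$.''

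Your boundary heuristic for correlated models is an assumption, not a fact. $\mathcal{L}_{\text{task}}$ explicitly pushes $\theta$'s margins up, and the paper reports preserved target utility, so nothing places $\tilde{\rvh}$ near the boundary of $S_{\rvx}(\theta)$.

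Finally, Theorem~\ref{thm:existence_of_obfuscated_prompts} does not supply the $\bm{\delta}$ you need. Its construction keeps $\|\Delta\rvh\|<\epsilon$, and your own Step 1 shows that any $\bm{\delta}$ with $\|\bm{\delta}\|<\epsilon'$ ports. You therefore need a separate assumption that the optimizer reaches points of $S_{\rvx}(\theta)$ outside $B_{\epsilon'}(\rvh)$. To your credit, this exposes a tension the paper's proof glosses over. The paper places $\bm{\delta}$ inside $B_{\epsilon}(\rvh)$ yet claims failure on $\theta'$, which silently requires $\epsilon'\le\|\bm{\delta}\|<\epsilon$.
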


\begin{proof}
The stability region $S_{\rvx}(\theta)$ depends on the specific parameters $\theta$. For a different model $\theta'$, the decision boundaries and resulting stability region $S_{\rvx}(\theta')$ generally do not align with $S_{\rvx}(\theta)$ in the high-dimensional embedding space.

The perturbation $\bm{\delta}$ constructed in Theorem 1 is optimized specifically to remain within $B_{\epsilon}(\rvh) \subset S_{\rvx}(\theta)$. Without knowledge of $\theta'$, there is no guarantee that $\rvh + \bm{\delta} \in S_{\rvx}(\theta')$. Empirically, adversarial or obfuscated perturbations are known to be sensitive to the specific curvature of the loss landscape, leading to $P(\tilde{\rvh} \in S_{\rvx}(\theta')) \ll P(\tilde{\rvh} \in S_{\rvx}(\theta))$, ensuring non-portability.
\end{proof}

\mypar{Extension to autoregressive generation.}
We previously focus on generation of a single token; we will now generalize our results to the practical scenario of open-ended generation.

\begin{proposition}[Autoregressive extension] \label{prop:autoregressive_extension}
The guarantees of utility and obfuscation extend from single-token generation to open-ended autoregressive generation.
\end{proposition}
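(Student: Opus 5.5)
We argue by induction on the generation step, reducing each step to the single-token setting already handled by the continuity lemma and \cref{thm:existence_of_obfuscated_prompts}. For a query $\rvq_i$, let the greedy continuation under the original embeddings be $y_{i,1}, y_{i,2}, \dots, y_{i,L}$, where $L$ is the (finite) generation length, capped by a maximum output length or the first end-of-sequence token. At step $t$ the model's input is the triple $(\rvh, \rvq_i, y_{i,<t})$. We define the step-wise margin
\[
m_t(\rvh', \rvq_i) \coloneqq f(\rvh', \rvq_i, y_{i,<t})_{y_{i,t}} - \max_{k \neq y_{i,t}} f(\rvh', \rvq_i, y_{i,<t})_k .
\]
We then replace the stability region by the enlarged set $S^{\mathrm{AR}}_{\rvx} = \{\rvh' \mid \forall \rvq_i \in \mathcal{Q},\ \forall t \le L,\ m_t(\rvh', \rvq_i) > 0\}$. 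This is the single-token stability region with the query set enlarged to $\mathcal{Q}' = \{(\rvq_i, y_{i,<t})\}$.

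The induction runs as follows. Suppose $\tilde{\rvh} \in S^{\mathrm{AR}}_{\rvx}$ and the obfuscated run has reproduced $y_{i,<t}$. Then its step-$t$ input is exactly $(\tilde{\rvh}, \rvq_i, y_{i,<t})$, and the equivalent-condition lemma applied with $m_t > 0$ forces $\tilde{y}_{i,t} = y_{i,t}$. The base case $t=1$ is the single-token result itself. So the two greedy trajectories coincide token by token, including the stopping time.

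To show $S^{\mathrm{AR}}_{\rvx}$ contains a ball around $\rvh$, note that each $m_t$ is continuous in $\rvh'$ by the continuity lemma, because the prefix $y_{i,<t}$ is held fixed and discrete. With finitely many queries and $L < \infty$, the set $S^{\mathrm{AR}}_{\rvx}$ is a finite intersection of open sets containing $\rvh$. We take $\epsilon_{\mathrm{AR}} = \min_{i,t} \epsilon_{i,t} > 0$. \cref{thm:existence_of_obfuscated_prompts} then applies verbatim with $\epsilon$ replaced by $\epsilon_{\mathrm{AR}}$: perturb enough low-sensitivity tokens so that $\sum_j \|\bm{\delta}_j\| < \epsilon_{\mathrm{AR}}$ while $d(\tilde{\rvx}, \rvx) \ge d_0$. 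Obfuscation is a property of the prompt alone, so it carries over unchanged.

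The main obstacle is positivity of every step margin along the trajectory. We need $m_t(\rvh, \rvq_i) > 0$ for all $t$, meaning no exact ties anywhere in the original greedy path. We would state this explicitly as an assumption, extending the paper's earlier ``the model is confident'' assumption, and note that ties form a measure-zero event. A second point is that $\epsilon_{\mathrm{AR}}$ may shrink as $L$ grows, so the guarantee is per finite horizon. For the non-portability claim, the manifold-mismatch argument transfers directly: requiring membership in $S^{\mathrm{AR}}_{\rvx}(\theta')$ for all $t$ is only harder than the single-step condition, so $P(\tilde{\rvh} \in S^{\mathrm{AR}}_{\rvx}(\theta')) \le P(\tilde{\rvh} \in S_{\rvx}(\theta'))$.
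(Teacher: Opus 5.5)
Your argument is correct and follows the same skeleton as the paper's proof: induction on the generation step, where the inductive hypothesis makes the generated history identical, so only the prompt embeddings differ. It departs from the paper at the one step that carries the weight.

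The paper's inductive step simply asserts that, ``as established in'' \cref{thm:existence_of_obfuscated_prompts}, keeping the initial perturbation inside $S_{\rvx}$ keeps every later prediction invariant. But $S_{\rvx}$ only certifies the first-token margins $m(\cdot,\rvq_i,y_i)>0$. At step $t$ the logits are evaluated on a different context containing $y_{i,<t}$, and nothing in the paper's definitions controls those later margins.

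Your enlarged region $S^{\mathrm{AR}}_{\rvx}$ treats each pair $(\rvq_i, y_{i,<t})$ as an additional query. Combined with the finite-intersection argument giving $\epsilon_{\mathrm{AR}} = \min_{i,t}\epsilon_{i,t}>0$, this is exactly what makes the inductive step valid. The existence theorem can then be re-invoked with the smaller radius.

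What your route buys is an explicit statement of hypotheses the paper leaves implicit:
a finite query set; a finite horizon, with a radius that may shrink as $L$ grows; and strictly positive margins along the whole original greedy trajectory.
You also go beyond the paper in two ways. You note that the stopping time is matched. You observe that $S^{\mathrm{AR}}_{\rvx}(\theta') \subseteq S_{\rvx}(\theta')$, so the non-portability claim can only be strengthened.

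One caveat applies equally to your version and the paper's. Re-invoking the existence theorem ``verbatim'' inherits its looseness. A discrete token substitution shifts the embedding by at least the minimum distance between distinct token embeddings, so achieving $\sum_j \|\bm{\delta}_j\| < \epsilon_{\mathrm{AR}}$ gets harder, and may become infeasible, as $\epsilon_{\mathrm{AR}}$ shrinks. Within the paper's framework, though, your reduction is as sound as the theorem it relies on.
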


\begin{proof}
We proceed by induction on the generation step $t$.

\textbf{Base case ($t=0$):} Theorem 1 guarantees $\tilde{y}_0 = y_0$ given $\tilde{\rvx}$.

\textbf{Inductive step:} Assume $\tilde{y}_k = y_k$ for all $k < t$. At step $t$, the input context for the obfuscated prompt is $[\tilde{\rvx}, y_0, \dots, y_{t-1}]$. Since the generated history is identical, the deviation in embeddings stems solely from the initial $\tilde{\rvx}$. As established in Theorem 1, provided the initial perturbation is within the stability region, the subsequent token prediction remains invariant. Thus, $\tilde{y}_t = y_t$.
\end{proof}

\section{Details on Main Experiment} \label{app:main_experiment_details}
In this section, we provide implementation details for the experiments of \cref{subsec:main_results}.

\mypar{Prompt initialization: Code-symbol encoding.}
In \cref{fig:prompt_initialization_dsl_template}, we show the prompt template for the target LLM to encode the original prompt into a code-symbol format that is not written in natural language.

\begin{figure}
\centering
\begin{tcolorbox}[left=1.5mm, right=1.5mm, top=1.5mm, bottom=1.5mm, colback=black!5!white, colframe=black, title={Prompt template to encode original prompt into code-symbol format for prompt initialization.}]
\begin{lstlisting}[basicstyle=\scriptsize\ttfamily, columns=flexible, breaklines=true, escapeinside={(*@}{@*)}]
You are a large language model with strong pattern abstraction ability.
Your task is NOT to execute a prompt, but to DESIGN a prompt DSL that YOU can reliably interpret and execute.
Constraints:
1. The DSL must preserve the intent and usage effect of the original prompt template.
2. The DSL must be directly usable (no intermediate explanation required at runtime).
3. The DSL should rely on structural patterns.
4. The DSL should be difficult for humans to read directly.
5. Do not reduce too much semantic information.
6. Such as """<transmute operator="textflow" mode="bidirectional" direction="input(*@$\rightarrow$@*)en">
    <source lang="auto" detect="true">{{QUERY}}</source>
    <constraint preserve="semantic_density" threshold="0.92"/>
    <output format="natural" register="neutral"/>
</transmute>"""

Prompt:
"""{target_prompt}"""
\end{lstlisting}
\end{tcolorbox}
\caption{Prompt template for the target LLM to rewrite the original prompt into code-symbol format with XML-like structure.}
\label{fig:prompt_initialization_dsl_template}
\end{figure}

\mypar{Noise scheduling strategy.}
In the main body, we do not include the noise scheduling strategy for simplicity.
However, in practice, we find that \textit{noise annealing} is helpful for obfuscation optimization.
We interpret this strategy as an implementation detail rather than our core design.
Noise annealing is motivated by our empirical observation that it is hard to inject large amounts of noise at late stages of obfuscation optimization.
We hypothesize that obfuscation optimization should be conducted at finer granularities at later stages, whereas aggressive noise injection at earlier stages is helpful for improving efficiency.
Therefore we draw inspiration from \textit{learning rate annealing}, a prevailing practice in neural network optimization with stochastic gradient descent.

Specifically, we start with a preset number of random characters to inject sampled noise into, which we call the \textit{initial noise size} (analogous to initial learning rate).
After each epoch, we adopt a linear schedule and decrease the number of noise by a constant, which we term the \textit{noise schedule rate} (analogous to learning rate decay rate).
We also designate a \textit{minimum noise size} such that optimization does not stop completely at late stages of obfuscation optimization.

\mypar{Caveat on obfuscation optimization on API LLMs.}
In practice, obfuscation optimization is hindered by the lack of acc to full logprobs for API LLMs such as GPT-4o and DeepSeek-chat.
For example, DeepSeek API\footnote{\url{https://api-docs.deepseek.com/api/create-chat-completion}} only returns the top logprobs of at most 20 tokens (\verb|top_logprobs| argument).
This slightly complicates the computation of the task loss.

The algorithm for computing cross-entropy task loss is shown in \cref{alg:task_loss_api_model}.
We set a default logprob (often -100.0) for label tokens missing from the returned top-$k$ logprobs.
We take advantage of the top-logprobs functionality of API LLMs while balancing efficiency and performance.
As is shown in \cref{tab:hparams_main_results}, we let $k=10$ since a large $k$ can greatly slow down API response rate while a small $k$ decreases the precision of loss computation.

\begin{algorithm}
\caption{Computation of task loss for API LLMs.}
\label{alg:task_loss_api_model}
\begin{algorithmic}
\STATE {\bfseries Input:} Base LLM $p(\cdot)$, obfuscated prompt $\tilde{\rvx}_t$ at the $t$-th step, query $\rvq_t$ at the $t$-th step, labels $\rvy_t$ at the $t$-th step, number of top logprobs $k$, default logprob $l_\text{def}$
\STATE {\bfseries Output:} Task loss $l_t$
\STATE $\mathrm{logps} \gets p(\cdot \vert \rvq_t, \tilde{\rvx}_t)$
\STATE $l_t \gets 0$
\STATE $i = 1$
\WHILE{$i \leq \vert \rvy_t \vert$}
    \IF{$\rvy_t[i]$ in top-$k$ logprobs}
        \STATE $l_t \gets l_t -\mathrm{logps}[i][\rvy_t[i]]$
    \ELSE
        \STATE $l_t \gets l_t - l_\text{def}$
    \ENDIF
    \STATE $i \gets i + 1$
\ENDWHILE
\STATE $l_t \gets l_t / \vert \rvy_t \vert$
\end{algorithmic}
\end{algorithm}

\mypar{Hyperparameters.}
We show the hyperparameters of our main experiment in \cref{tab:hparams_main_results}.

\begin{table}
\caption{Hyperparameters for \ourshort{} in main experiment.}
\label{tab:hparams_main_results}
\centering
\footnotesize
\begin{tabular}{lc}
\toprule
Hyperparameter & Value \\
\midrule
$\lambda$ & 0.1 \\
$\gamma$ & 0.1 \\
Epochs & 50 \\
Candidates per epoch & 20 \\
Noise set & ASCII characters (0--127) \\
Initial noise size & \nicefrac{1}{4} prompt length \\
Noise schedule rate & 8 \\
Minimum noise size & 4 \\
Top-$k$ logprobs & 10 \\
Default logprob & -100.0 \\
\bottomrule
\end{tabular}
\end{table}

\section{Details on Case Study}
\label{app:details-on-case-study}

\mypar{Reproducibility.}
In \cref{subsec:case_study} of the main body, we present an obfuscated prompt and its corresponding original prompt as well as model responses from the target model and non-target model.
In order to facilitate reproduction and to support the authenticity of our results, we have provided \textbf{an executable demonstration of the case study in the supplementary material}.
We encourage readers to execute the Python script for themselves, which requires access to the DeepSeek API\footnote{\url{https://api-docs.deepseek.com/}}.

\section{Details on Adaptive Attacks} \label{app:adaptive_attack_details}
In this section, we provide implementation details for the adaptive attack experiments of \cref{subsec:adaptive_attack}.

\subsection{LLM-Assisted Prompt Recovery} \label{subapp:llm_assisted_recovery_details}

We show the prompt template for LLM-assisted prompt recovery in \cref{fig:llm_assisted_recovery_prompt_template}.
The LLM is provided with the obfuscated prompt and instructed to guess the original prompt.
Although we acknowledge that this implementation is simple, we emphasize that this strategy directly supports our claim that even the target LLM for which the obfuscated prompt was optimized fails to fully interpret the obfuscated prompt.

\begin{figure}
\centering
\begin{tcolorbox}[left=1.5mm, right=1.5mm, top=1.5mm, bottom=1.5mm, colback=black!5!white, colframe=black, title={Prompt template for LLM-assisted prompt recovery.}]
\begin{lstlisting}[basicstyle=\scriptsize\ttfamily, columns=flexible, breaklines=true]
What is the original prompt for the following prompt?

Prompt:
"""{obfuscated_prompt}"""
\end{lstlisting}
\end{tcolorbox}
\caption{Prompt template for LLM-assisted prompt recovery.}
\label{fig:llm_assisted_recovery_prompt_template}
\end{figure}

\begin{figure}
\centering
\begin{tcolorbox}[
  left=1.5mm, right=1.5mm, top=1.5mm, bottom=1.5mm,
  colback=black!5!white, colframe=black,
  title={Prompt template for naive prompt baseline (LLM-assisted prompt induction).}
]
\begin{lstlisting}[basicstyle=\scriptsize\ttfamily, columns=flexible, breaklines=true]
You are given a small set of input-output examples produced by an agent.
Your task is to write a SINGLE system prompt that, when used to initialize an LLM agent,
will reproduce the same behavior on similar inputs.

[Optional] Task context:
{task_description}

Examples:
{io_pairs}
# Each example should follow:
# Input:  ...
# Output: ...

Requirements for the system prompt:
- Output ONLY the system prompt text (no explanations, no markdown).
- The prompt should specify the agent's role, goals, constraints, and tool-use policy if needed.
- Be concise but complete; preserve the behavior implied by the examples.
- Do NOT refer to these examples explicitly in the final prompt.

System prompt:
\end{lstlisting}
\end{tcolorbox}
\caption{Template used to construct the naive prompt baseline via LLM-assisted prompt induction from observed input-output behavior.}
\label{fig:naive_prompt_template}
\end{figure}

\subsection{Deobfuscation Attack} \label{subapp:deobfuscation_details}
\paragraph{Deobfuscation optimization.} We show the hyperparameters for deobfuscation attack in \cref{tab:hparams_deobfuscation}.
All hyperparameters are consistent with those of obfuscation optimization, except for the coefficient of the optimization objective.
The objective function for the deobfuscation attack includes only $\mathcal{L}_\text{task}$ and $\mathcal{L}_\text{non-lang}$ terms, not $\mathcal{L}_\text{dist}$, since the original prompt is inaccessible for the attacker.
We also invert the coefficient for the $\mathcal{L}_\text{non-lang}$ term since the attacker aims to recover the original prompt written in natural language.

\paragraph{Naive Prompt Baseline.}
We also consider a naive baseline where the attacker re-synthesizes a system prompt from task behavior, rather than deobfuscating the deployed prompt. Specifically, the attacker collects a small set of representative input--output pairs via black-box queries and uses an LLM to draft a prompt that reproduces the observed functionality. The drafted prompt is evaluated directly without any adaptive tuning or optimization. The concrete induction template used to generate the naive prompt is provided in \cref{fig:naive_prompt_template}.

\begin{table}
\caption{Hyperparameters for deobfuscation attack.}
\label{tab:hparams_deobfuscation}
\centering
\footnotesize
\begin{tabular}{lc}
\toprule
Hyperparameter & Value \\
\midrule
$\gamma$ & -0.1 \\
Epochs & 50 \\
Candidates per epoch & 20 \\
Noise set & ASCII characters (0--127) \\
Initial noise size & \nicefrac{1}{4} prompt length \\
Noise schedule rate & 8 \\
Minimum noise size & 4 \\
Top-$k$ logprobs & 10 \\
\bottomrule
\end{tabular}
\end{table}

\section{Discussions on \ourshort{} Methodology}
\label{app:discussions-on-methodology}
\mypar{On \ourshort{} vs. prompt compression techniques.}
Historically, prompt compression is motivated by maintaining task performance comparable to original prompts with less memory and computational cost.
\ourshort{} might be reminiscent of \textit{hard prompt compression methods}, according to the taxonomy of \citet{li2025prompt}.
However, these methods focus on removing redundancy from the original prompts and often require auxiliary compression models.
Furthermore, hard prompt compression methods do not take obfuscation into consideration.
All these properties clearly distinguish \ourshort{} from prompt compression techniques.

\mypar{Concerns of invertibility.}
A recent paper claims that transformer language models are almost surely injective and thus invertible (i.e. to recover the input token sequence given hidden state activations)~\citep{nikolaou2025language}, which seems to challenge our prompt protection method.
However, we argue that the arguments of this paper do not fundamentally undermine the validity of \ourshort{}; instead, \textbf{it strengthens the security of \ourshort{}}.

Given a task query $\rvq$ and a prompt $\rvx$, there are countless hidden states that yield the same greedy output as $\rvh = f_{<l}(\rvx)$ where $f_{<l}(\cdot)$ is the first $l$ layers of the LM, since the latent space is continuous and the LM has inherent robustness with respect to infinitesimal perturbations to hidden states.
All these functionally equivalent hidden states belong to a set of activations: $B_\epsilon = \{\rvh' \vert \| \rvh' - \rvh \|_2 < \epsilon \}$, where $\epsilon>0$ is a small constant that ensures greedy model outputs are the same.
Suppose it is possible to trace back each hidden state of $B_\epsilon$ back to input token sequences, then there are infinite number of prompts (assuming model context size is intractably large and we use paddings to handle various prompt lengths) that lead to the same model outputs, since $\vert B_\epsilon \vert$ is infinite.
This makes the prompt inversion attack intractable, since it is impossible for the attacker to identify the prompt $\rvx$ among the almost infinite set of functionally equivalent prompts.


\end{document}